\theoremstyle{plain}
\numberwithin{equation}{section}
\newtheorem{thm}{Theorem}[section]
\newtheorem{lem}[thm]{Lemma}
\newtheorem{cor}[thm]{Corollary}
\newenvironment{exam}[1]
{\begin{flushleft}\textbf{Example #1}.\enspace}%
{\end{flushleft}}
\newcommand{\complex}{{\mathbb C}}
\newcommand{\Natural}{{\mathbb N}}
\newcommand{\real}{{\mathbb R}}
\newcommand{\minusone}{{\mathbin{\text{--}1}}} 
\newcommand{\binle}{{\mathbin{\le}}} 
\newcommand{\binperp}{{\mathbin{\perp}}} 
\newcommand{\ahat}{\widehat{A}}
\newcommand{\rmtr}{\mathrm{tr\,}}
\newcommand{\rmdiag}{\mathrm{diag\,}}
\newcommand{\escript}{\mathcal{E}}
\newcommand{\fscript}{\mathcal{F}}
\newcommand{\lscript}{\mathcal{L}}
\newcommand{\pscript}{\mathcal{P}}
\newcommand{\rscript}{\mathcal{R}}
\newcommand{\sscript}{\mathcal{S}}
\newcommand{\ab}[1]{\left|#1\right|}
\newcommand{\doubleab}[1]{\left|\left|#1\right|\right|}
\newcommand{\brac}[1]{\left\{#1\right\}}
\newcommand{\paren}[1]{\left(#1\right)}
\newcommand{\sqbrac}[1]{\left[#1\right]}
\newcommand{\elbows}[1]{{\left\langle#1\right\rangle}}
\newcommand{\floorbrac}[1]{{\lfloor#1\rfloor}}
\newcommand{\sqparen}[1]{{\left[#1\right)}}
\begin{document}

\title{FINITE-DIMENSIONAL\\CONVEX EFFECT ALGEBRAS}
\author{Stan Gudder\\ Department of Mathematics\\
University of Denver\\ Denver, Colorado 80208\\
sgudder@du.edu}
\date{}
\maketitle

\begin{abstract}
We first show that the convex effect algebras (CEA) approach to quantum mechanics is more general than the general probabilistic theories approach. We then restrict our attention to finite-dimension CEA's. After an introductory Section~1, we present basic definitions in Section~2. Section~3 studies convex subeffect algebras and observables. In Section~4 we consider strong CEA's and strong observables. We show that a CEA is strong if and only if it is classical. Informationally complete observables on classical CEA's are studied in Section~5. Section~6 considers quantum CEA's in Hilbert spaces.
\end{abstract}

\section{Introduction}  
Various types of stochastic theories have recently been important in studies of quantum mechanics and its generalization. The two types that we shall consider here are general probabilistic theories (GPT) \cite{bhss13,fghlapp,fhl18,gbca17,jp17} and convex effect algebras (CEA) \cite{ gud73,gp98,gpbb99}. The central role in these theories is played by the set of effects $\escript$ and the set of states $\sscript$. The effects correspond to yes-no measurements or experiments and the states correspond to preparation procedures that specify the initial conditions of the system being measured. Usually, each effect $a$ and state $s$ experimentally determine a probability $P(a,s)\in\sqbrac{0,1}$ that the effect $a$ occurs when the system has been prepared in the state $s$. Simple and physically motivated properties of $P(a,s)$ determine a mathematical structure for the sets $\escript$ and $\sscript$.

This structure is given in terms of an ordered linear space $V$ \cite{nam57,rock70}. The sets $\escript$ and $\sscript$ are then represented by certain subsets of $V$ and its dual space $V^*$ which we discuss in detail in Section~2. We shall show that GPT and CEA determine two different ways of viewing the pair $(V,V^*)$. From the GPT viewpoint, the set of states is considered basic and is described by a set
$\sscript\subseteq V$ while the set of effects is secondary with $\escript\subseteq V^*$. From the CEA viewpoint,
$\escript$ is considered basic with $\escript\subseteq V$ while $\sscript$ is secondary with $\sscript\subseteq V^*$. Roughly speaking, GPT and CEA are dual viewpoints. However, GPT results in a stronger structure involving an order-determining set of states, while CEA is more general. The two viewpoints are equivalent when this order-determining set of states condition holds. Since CEA is more general, we shall employ this viewpoint for the remainder of the paper. We also restrict our discussion to finite-dimensional spaces $V$. Although this is a strong restriction, it includes theories of quantum computation and quantum information \cite{hz12,nc00}.

Section~3 presents the basic definitions of the theory and compares the GPT and CEA viewpoints. In Section~3 we characterize convex subeffect algebras of a CEA. In Section~4 we consider strong CEA's and strong observables. We show that a CEA is strong if and only if it is classical. Informationally complete observables on classical CEA's are characterized in Section~5. Moreover, a necessary but not sufficient condition and a sufficient but not necessary condition for a pair of observables to be informationally complete is presented. Finally, Section~6 considers quantum CEA's in Hilbert space.There is some overlap between this work and that given in \cite{fghlapp}. We include this to make the present article self-contained.

\section{Basic Definitions}  
Let $V$ be a real linear space with zero 0. A subset $K$ of $V$ is a \textit{positive cone} if $\real ^+K\subseteq K$,
$K+K\subseteq K$ and $K\cap (-K)=\brac{0}$. For $x,y\in V$ we define $x\le y$ if $y-x\in K$. Then $\binle$ is a partial order on $V$ and we call $(V,K)$ an
\textit{ordered linear space} with positive cone $K$ \cite{nam57,rock70}. We say that $K$ is \textit{generating} if $V=K-K$. Let $u\in K$ with $u\ne 0$ and form the interval
\begin{equation*}
\sqbrac{0,u}=\brac{x\in K\colon x\le u}
\end{equation*}
For $x\in\sqbrac{0,u}$, we call $x'=u-x\in\sqbrac{0,u}$ the \textit{complement} of $x$. It is easy to check that $\sqbrac{0,u}$ is a convex subset of $V$ and $\lambda x\in\sqbrac{0,u}$ for all $\lambda\in\sqbrac{0,1}\subseteq\real$, $x\in\sqbrac{0,u}$. We say that $\sqbrac{0,u}$ is
\textit{generating} if $K=\real ^+\sqbrac{0,u}$ and $V=K-K$. If $\escript =\sqbrac{0,u}$ is generating, we call $\escript$ a \textit{convex effect algebra} (CEA). (It can be shown \cite{gpbb99,nam57} that $V$ is a normed space but this will not be needed if $V$ is finite-dimensional which we assume later.) For $a,b\in\escript$, if $a+b\in\escript$ we write
$a\perp b$. Then $\binperp$ and $\binle$ determine each other in the sense that $a\perp b$ if and only if $a\le b'$.

The \textit{dual} $V^*$ of $V$ is the set of (bounded) linear functionals $f\colon V\to\real$. We define
\begin{equation*}
V_+^*=\brac{f\in V^*\colon f(x)\ge 0 \hbox{ for all }x\in K}
\end{equation*}
Then $(V^*,V_+^*)$ becomes an ordered linear space called the \textit{dual} of $(V,K)$. A \textit{state} on $V$ is an element $s\in V_+^*$ satisfying $s(u)=1$ and we denote the set of states by $\sscript$. The elements of
$\escript =\sqbrac{0,u}$ represent effects, 0 is the effect that is always false (no) and $u$ is the effect that is always true (yes). If $a\in\escript$, $s\in\sscript$, then $s(a)$ gives the probability that $a$ is true in the state $s$. Of course, $s(a)\in\sqbrac{0,1}$ and $s(a+b)=s(a)+s(b)$ whenever $a\perp b$. We say that $\sscript$ is
\textit{order-determining} if $s(a)\le s(b)$ for every $s\in\sscript$ implies that $a\le b$. In general, $\sscript$ is not order-determining
\cite{gpbb99}. We call $(\escript ,\sscript )$ a CEA \textit{viewpoint} of a physical system. In this case, $\escript$ serves the primary role and
$\sscript$ is secondary.

To consider the GPT viewpoint, let $(V,K)$ again be an ordered linear space and let $u\in V_+^*$ with $u\ne 0$. In this case, the set of states $\sscript$ serves the primary role where
\begin{equation*}
\sscript =\brac{s\in K\colon u(s)=1}
\end{equation*}
Then $\sscript$ is a convex set which we can assume generates $K$ \cite{fghlapp,fhl18}. The set of effects
$\escript =\sqbrac{0,u}\subseteq V_+^*$ is now secondary and if $a\in\escript$, $s\in\sscript$, the $a(s)$ represents the probability that $a$ is true in the state $s$. If $a\in V^*$ then $a\in V_+^*$ if and only if $a(v)\ge 0$ for every
$v\in K$ which is equivalent to $a(s)\ge0$ for every $s\in\sscript$. Hence, if $a,b\in\escript$, then $a\le b$ if and only if $b-a\ge 0$. This is equivalent to $(b-a)(s)\ge 0$ for all $s\in\sscript$ which holds when $a(s)\le b(s)$. We conclude that $\escript$ has an order-determining set of states $\sscript$.

The main difference between the CEA and GPT viewpoints is that in the latter there is an order-determining set of states while in the former this need not hold. It can be shown that if $\sscript$ is order-determining on $\escript$, then the two viewpoints are equivalent, each being the dual of the other \cite{gpbb99}. In this case, $(\escript ,\sscript ,P)$ forms a \textit{effect-state space} where $P\colon\sscript\times\escript\to\sqbrac{0,1}$ given by $P(s,a)=s(a)$ is the probability function. We conclude that the CEA viewpoint is more general than the GPS viewpoint. For this reason, we shall employ the CEA viewpoint in the sequel.

We shall also assume that the linear space $V$ is finite-dimensional. Of course, this is a strong restriction, but it saves us from considering technical topological details. This finite-dimensional framework is strong enough to include the theory of quantum computation and quantum information which has been important recently \cite{hz12,nc00}.

\section{Convex Subeffect Algebras}  
In the sequel, we shall assume that $\escript =\sqbrac{0,u}$ is a CEA where $\sqbrac{0,u}$ is a generating interval in a finite-dimensional ordered linear space $V$. If $\dim V=n$, we define $\dim\escript =n$. A subset
$\fscript\subseteq\escript$ is a \textit{convex subeffect algebra} (CSEA) of $\escript$ if $0,u\in\fscript$, $a\in\fscript$ implies that $a'=u-a\in\fscript$, $\lambda a\in\fscript$ for every $a\in\fscript$,
$\lambda\in\sqbrac{0,1}\subseteq\real$ and $a,b\in\fscript$ with $a\perp b$ implies that $a+b\in\fscript$.

\begin{lem}    
\label{lem31}
Let $\fscript$ be a CSEA of $\escript$.
{\rm{(i)}}\enspace $\fscript$ is a convex subset of $\escript$ in the sense that $a,b\in\fscript$, $\lambda\in\sqbrac{0,1}$ imply that
$\lambda a+(1-\lambda )b\in\fscript$. 
{\rm{(ii)}}\enspace If $n\in\Natural$, $a\in\fscript$ and $na\le u$, then $na\in\fscript$.
{\rm{(iii)}}\enspace If $\lambda\in\real$ with $\lambda \ge 0$ and $a\in\fscript$, $\lambda a\le u$, then $\lambda a\in\fscript$.
{\rm{(iv)}}\enspace If $a,b\in\fscript$ and $b\le a$, then $a-b\in\fscript$. 
\end{lem}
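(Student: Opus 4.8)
The plan is to prove the four parts essentially in sequence, using each part to bootstrap the next, since the statements build naturally on one another.

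For part (i), I would rewrite the convex combination so as to reduce it to the defining closure properties of a CSEA. Given $a,b\in\fscript$ and $\lambda\in\sqbrac{0,1}$, set $x=\lambda a$ and $y=(1-\lambda)b$, both of which lie in $\fscript$ by the scalar-multiplication axiom. The only subtlety is to check that $x\perp y$, i.e. that $x+y=\lambda a+(1-\lambda)b\in\escript$; but this sum is clearly in $\sqbrac{0,u}$ since it is a convex combination of elements of $\sqbrac{0,u}$ and $\sqbrac{0,u}$ is convex (as noted in Section~2). Hence $x\perp y$ and the sum-closure axiom gives $\lambda a+(1-\lambda)b\in\fscript$.

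For part (ii), I would argue by induction on $n$. The base case $n=1$ is trivial, and for the inductive step I note that if $na\le u$ with $n\ge 2$, then $(n-1)a\le u$ as well, so by induction $(n-1)a\in\fscript$. It then suffices to check $(n-1)a\perp a$, that is, $(n-1)a+a=na\in\escript$, which holds precisely because $na\le u$ means $na\in\sqbrac{0,u}$. Sum-closure then yields $na\in\fscript$. For part (iii), I would split $\lambda$ into its integer and fractional parts, writing $\lambda=n+\mu$ with $n=\lfloor\lambda\rfloor\in\Natural\cup\{0\}$ and $\mu\in\sqbrac{0,1}$. Since $\lambda a\le u$ and $na\le\lambda a$, part (ii) gives $na\in\fscript$; since $\mu\in\sqbrac{0,1}$, the scalar axiom gives $\mu a\in\fscript$. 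One then checks $na\perp\mu a$ via $na+\mu a=\lambda a\in\escript$ and applies sum-closure.

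For part (iv), given $a,b\in\fscript$ with $b\le a$, the natural move is to use complements: $a'=u-a$ and $b'=u-b$ both lie in $\fscript$ by the complement axiom, and $b\le a$ gives $a'\le b'$. I would then try to realize $a-b$ as a sum of members of $\fscript$, observing that $a-b=a+b'-u=(a')'-b'$ and hoping to apply sum-closure to $b'$ and $a-b$ or to express $a-b$ as a complement of something already known to be in $\fscript$; concretely, $a-b=(u-a+b)'=(a'+b)'$, so it suffices to show $a'+b\in\fscript$. Since $a'\le b'$ means $a'+b\le b'+b=u$, we have $a'\perp b$, and as $a',b\in\fscript$, sum-closure gives $a'+b\in\fscript$; applying the complement axiom once more yields $a-b=(a'+b)'\in\fscript$. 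The main obstacle throughout is verifying the orthogonality condition $\perp$ at each application of sum-closure, which always reduces to confirming that the relevant sum stays within $\sqbrac{0,u}$; the identity $a\perp b\iff a\le b'$ noted in Section~2 makes these checks routine once the decomposition is chosen correctly.
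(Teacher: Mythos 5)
Your proof is correct and follows essentially the same route as the paper's for all four parts: the scalar axiom plus convexity of $\sqbrac{0,u}$ for (i), induction for (ii), the integer/fractional split of $\lambda$ combined with (ii) for (iii), and the identity $a-b=(a'+b)'$ together with sum-closure and the complement axiom for (iv). The orthogonality checks you flag are exactly the ones the paper verifies, so no gaps remain.
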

\begin{proof}
(i)\enspace Since $\lambda a\le a$, $(1-\lambda )b\le b$ we have that $\lambda a,(1-\lambda )b\in\fscript$. Since
\begin{equation*}
\lambda a+(1-\lambda )b\le\lambda u+(1-\lambda )u=u
\end{equation*}
we have that $\lambda a\perp (1-\lambda )b$. Hence, $\lambda a+(1-\lambda )b\in\fscript$.\newline
(ii)\enspace We prove the result by induction on $n$. The result surely holds for $n=1$. Suppose the result holds for
$n\ge 1$ and
$(n+1)a\le u$. Then $na+a\le u$ and since $na\le u$ we have that $na\in\fscript$. Since $na\perp a$ and $a\in\fscript$ we have that
\begin{equation*}
(n+1)a=na+a\in\fscript
\end{equation*}
which proves the result by induction.
(iii)\enspace If $\lambda\le 1$, then $\lambda a\in\fscript$ so suppose that $\lambda >1$. Letting $\floorbrac{\lambda}$ be the integer part of
$\lambda$ and $\mu =\lambda -\floorbrac{\lambda}$ we have that $0\le\mu\le 1$ and $\lambda =\floorbrac{\lambda}+\mu$. Since
$\floorbrac{\lambda}a\le\lambda a$ and $\lambda a\in\escript$ we have that $\floorbrac{\lambda}a\in\escript$. By (ii) we have that
$\floorbrac{\lambda}a\in\fscript$ and $\mu a\in\fscript$. Since $\floorbrac{\lambda}a+\mu a=\lambda a\in\escript$ we have that $\floorbrac{\lambda}a\perp\mu a$. Hence, $\lambda a\in\fscript$.
(iv)\enspace Since $b\le (a')'$ we have that
\begin{equation*}
(a-b)'=u-(a-b)=b+(u-a)=b+a'\in\fscript
\end{equation*}
Hence, $a-b\in\fscript$.
\end{proof}

\begin{thm}    
\label{thm32}
Let $\escript =\sqbrac{0,u}\subseteq V$ be a CEA. Then $\fscript\subseteq\escript$ is a CSEA of $\escript$ if and only if there exists a linear subspace $V_1$ of $V$ such that $u\in V_1$ and $\fscript =\sqbrac{0,u}\cap V_1=\escript\cap V_1$.
\end{thm}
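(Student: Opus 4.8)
The plan is to prove the two implications separately, since they have very different characters. The \emph{if} direction is a routine closure check. Given a subspace $V_1$ with $u\in V_1$ and $\fscript=\escript\cap V_1$, I would verify each CSEA axiom by observing that $\escript$ supplies the order-theoretic closure while $V_1$ supplies closure under the linear operations: $0\in\escript\cap V_1$ and $u\in\escript\cap V_1$ give $0,u\in\fscript$; if $a\in\fscript$ then $a'=u-a$ lies in $\escript$ as a complement and in $V_1$ since $u,a\in V_1$; and for $\lambda\in\sqbrac{0,1}$ and for orthogonal sums, $\lambda a$ and $a+b$ remain in $\escript$ by the CEA structure and in $V_1$ by linearity. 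Nothing delicate happens here.

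The substance is the \emph{only if} direction. Given a CSEA $\fscript$, I would set $C=\real^+\fscript$ and take $V_1=C-C$. The first task is to check that $C$ is a convex cone: closure under nonnegative scaling is immediate, and closure under addition follows from Lemma~\ref{lem31}(i) by writing $\lambda a+\mu b=(\lambda+\mu)\paren{\tfrac{\lambda}{\lambda+\mu}\,a+\tfrac{\mu}{\lambda+\mu}\,b}$ as a nonnegative multiple of a convex combination of elements of $\fscript$ (the case $\lambda+\mu=0$ being trivial). Once $C$ is a convex cone, $V_1=C-C$ is automatically a linear subspace (indeed the span of $\fscript$), and $u\in\fscript\subseteq C\subseteq V_1$, so $V_1$ has the required properties.

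It then remains to prove $\fscript=\escript\cap V_1$. The inclusion $\fscript\subseteq\escript\cap V_1$ is immediate, and one also sees quickly from Lemma~\ref{lem31}(iii) that $\fscript=C\cap\escript$. The crux is the reverse inclusion $\escript\cap V_1\subseteq\fscript$, and here is where I expect the main obstacle: an element $x\in\escript\cap V_1$ is only \emph{a priori} a difference $x=p-q$ with $p,q\in C$, and such a difference need not itself lie in $C$. The point is that intersecting with $\escript$ forces exactly the positivity and boundedness one needs. Since $x\in\escript$ gives $x\ge 0$, we get $q\le p$; choosing $t>0$ small enough that $tp,tq\in\fscript$ (using $\sqbrac{0,1}$-scalar closure of $\fscript$) yields $tq\le tp$, whence Lemma~\ref{lem31}(iv) gives $tx=tp-tq\in\fscript$; finally, since $x\le u$, Lemma~\ref{lem31}(iii) applied with the factor $1/t$ returns $x\in\fscript$. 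I would treat the degenerate cases (when $\lambda$ or $\mu$ is $0$) separately, but these reduce to $0\in\fscript$ and present no difficulty. The argument succeeds precisely because the two hypotheses of Lemma~\ref{lem31}(iii),(iv) — comparability after rescaling and the bound by $u$ — are guaranteed by membership in $\escript$.
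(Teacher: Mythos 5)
Your proposal is correct and follows essentially the same route as the paper: both reduce an element of $\escript\cap V_1$ to a difference $\lambda b-\mu c$ with $b,c\in\fscript$ (using convexity of $\fscript$ to collapse the linear combination to single elements), rescale so that both terms land in $\fscript$ with the right ordering, and then apply Lemma~\ref{lem31}(iv) followed by Lemma~\ref{lem31}(iii) to undo the rescaling. The only difference is cosmetic: the paper fixes the scaling factor $\tfrac{1}{1+\mu}$ and needs one extra appeal to Lemma~\ref{lem31}(iii) to place $\tfrac{\lambda}{1+\mu}b$ in $\fscript$, whereas you take $t$ small enough that both rescaled terms lie in $\fscript$ by $\sqbrac{0,1}$-scalar closure alone.
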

\begin{proof}
If $\fscript =\sqbrac{0,u}\cap V_1$ with $u\in V_1$, then clearly $\fscript$ is a CSEA of $\escript$. Conversely, let $\fscript$ be a CSEA of
$\escript$. Let $V_1$ be the subspace of $V$ generated by $\fscript$. Then $u\in V_1$ and we shall show that
$\fscript =\sqbrac{0,u}\cap V_1$. If $a\in\fscript$, then $a\in V_1$ and $a\in\escript$ so $a\in\sqbrac{0,u}\cap V_1$. Conversely, suppose $a\in\sqbrac{0,u}\cap V_1$. If $a=0$, then clearly $a\in\fscript$ and if $a\ne 0$, then $a=\sum\lambda _ia_i$, $a_i\in\fscript$,
$\lambda _i\in\real$, $\lambda _i\ne 0$. We can write
\begin{equation*}
a=\sum\alpha _ib_i-\sum\beta _ic_i
\end{equation*}
where $\alpha _i,\beta _i>0$, $b_i,c_i\in\fscript$. Letting $\lambda =\sum\alpha _i$, $\mu =\sum\beta _i$
\begin{equation*}
b=\sum\tfrac{\alpha _i}{\lambda}\,b_i,\quad c=\sum\tfrac{\beta _i}{\mu}\,c_i
\end{equation*}
we obtain $b,c\in\fscript$, $\lambda ,\mu >0$ and $a=\lambda b-\mu c$. Now $0\le\lambda b-\mu c\le u$ implies that
\begin{equation*}
\mu c\le\lambda b\le u+\mu c\le (1+\mu )u
\end{equation*}
Hence,
\begin{equation*}
0\le\tfrac{\mu}{1+\mu}\, c\le\tfrac{\lambda}{1+\mu}\, b\le u
\end{equation*}
It follows from Lemma~\ref{lem31}(iii) that $\tfrac{\lambda}{1+\mu}\, b\in\fscript$ and by Lemma~\ref{lem31}(iv) we have that
\begin{equation*}
d=\tfrac{\lambda}{1+\mu}\, b-\tfrac{\mu}{1-\mu}\, c\in\fscript
\end{equation*}
Since $(1+\mu )d=a\in\sqbrac{0,u}$, by Lemma~\ref{lem31}(iii) we obtain $a\in\fscript$. Hence, $\fscript =\sqbrac{0,u}\cap V_1$.
\end{proof}

Applying Theorem~\ref{thm32}, we conclude that $\fscript =\sqbrac{0,u}\subseteq V_1$ where $\sqbrac{0,u}$ generates $V_1$ and
$\dim\fscript =\dim V_1$. Hence, a CSEA is a CEA in its own right.

If $\fscript _1=\escript\cap V_1$, $\fscript _2=\escript\cap V_2$ are CSEA's of $\escript$, it is clear that
\begin{equation*}
\fscript _1\cap\fscript _2=\escript\cap V_1\cap V_2
\end{equation*}
is the largest CSEA contained in $\fscript _1$ and $\fscript _2$ and we write
$\fscript _1\wedge\fscript _2=\fscript _1\cap\fscript _2$. If
\begin{equation*}
\fscript _1\cap\fscript _2=\brac{a\in\escript\colon a=\lambda u,\lambda\in\sqbrac{0,u}}
\end{equation*}
we say that $\fscript _1$ and $\fscript _2$ are \textit{separated}. In this case, $\fscript _1\cap\fscript _2$ is isomorphic to $\sqbrac{0,1}$ and $V_1\cap V_2$ is isomorphic to $\real$. The smallest CSEA of $\escript$ containing
$\fscript _1$ and $\fscript _2$ is
\begin{equation*}
\fscript _1\vee\fscript _2=\escript\cap (V_1\vee V_2)
\end{equation*}
where $V_1\vee V_2$ is the subspace of $V$ generated by $V_1$ and $V_2$. If $\fscript _1$ and $\fscript _2$ are separated, then
\begin{equation*}
\dim (\fscript _1\vee\fscript _2)=\dim\fscript _1+\dim\fscript _2-1
\end{equation*}

\begin{cor}    
\label{cor33}
$\fscript$ is a CSEA of $\escript$ if and only if there exist linearly independent effects $a_1,a_2,\ldots$, $a_m\in\escript$ such that
$\sum r_ia_i=u$ for some $r_i\in\real$ and
\begin{equation}                
\label{eq31}
\fscript =\brac{a\in\escript\colon a=\sum\lambda _ia_i,\quad\lambda _i\in\real}
\end{equation}
\end{cor}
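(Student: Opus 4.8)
The plan is to deduce this directly from Theorem~\ref{thm32}, which already characterizes a CSEA as $\fscript=\escript\cap V_1$ for a subspace $V_1$ containing $u$; the only work is to translate between the condition ``subspace containing $u$'' and the condition ``finitely many linearly independent effects whose real span meets $\escript$ in $\fscript$ and which represent $u$.''

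First I would treat the forward direction. Suppose $\fscript$ is a CSEA. By Theorem~\ref{thm32} there is a subspace $V_1$ with $u\in V_1$ and $\fscript=\escript\cap V_1$, and, as noted in the remark following that theorem, $\fscript$ generates $V_1$ (so $\dim\fscript=\dim V_1=:m$). Since $\fscript\subseteq\escript$ spans $V_1$, I would extract from $\fscript$ a maximal linearly independent subset $a_1,a_2,\ldots,a_m$; this is a basis of $V_1$ consisting of effects. Because $u\in V_1$, we may write $u=\sum r_ia_i$ with $r_i\in\real$, and because $V_1=\brac{\sum\lambda _ia_i\colon\lambda _i\in\real}$, the identity $\fscript=\escript\cap V_1$ is exactly the description \eqref{eq31}.

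Conversely, given linearly independent effects $a_1,\ldots,a_m\in\escript$ with $\sum r_ia_i=u$ and $\fscript$ of the form \eqref{eq31}, I would set $V_1$ to be the linear span of $a_1,\ldots,a_m$. Then $u\in V_1$ by the first hypothesis, and \eqref{eq31} says precisely that $\fscript=\escript\cap V_1$. Theorem~\ref{thm32} then yields that $\fscript$ is a CSEA, completing the equivalence.

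I do not expect a genuine obstacle here, since the corollary is essentially a finite-dimensional restatement of Theorem~\ref{thm32}. The only step needing a word of justification is the extraction of a basis \emph{of effects}, and this is immediate: a spanning set always contains a basis, so a maximal linearly independent subset of the spanning set $\fscript$ is a basis of $V_1$ whose members lie in $\escript$. The finite-dimensionality of $V$ (hence of $V_1$) guarantees that finitely many effects $a_1,\ldots,a_m$ suffice.
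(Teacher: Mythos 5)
Your proposal is correct and follows essentially the same route as the paper: both directions reduce to Theorem~\ref{thm32}, and the forward direction comes down to producing a basis of $V_1$ consisting of effects. The only (harmless) difference is that you extract a maximal linearly independent subset directly from the spanning set $\fscript$, whereas the paper first manufactures a finite spanning set of effects by decomposing an arbitrary basis of $V_1$ into differences of elements of $\fscript$ and then prunes it; your shortcut is valid and slightly cleaner.
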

\begin{proof}
If $a_1,\ldots ,a_m\in\escript$ satisfying the given conditions, it is easy to verify that $\fscript$ is a CSEA of $\escript$. Conversely, if $\fscript$ is a CSEA of $\escript$, then by Theorem~\ref{thm32}, there exists a linear subspace $V_1$ of $V$ such that $u\in V_1$ and
$\fscript =\sqbrac{0,u}\cap V_1$. Let $v_1,v_2,\ldots ,v_r$ be a basis for $V_1$. Since $\fscript$ generates $V_1$, every $v_i$ has the form
$v_i=\alpha _iv_i^+-\beta _iv_i^-$, $v_i^+,v_i^-\in\fscript$, $\alpha _i,\beta _i\ge 0$. Let $b_i=v_i^+$, $i=1,2,\ldots ,r$ and
$b_i=v_{i-r}^-$, $i=r+1,r+2,\ldots ,2r$. Since $\brac{v_i}$ is a basis for $V_1$, we have that $\sum r_ib_i=u$ for some $r_i\in\real$. Also, since
$\brac{v_i}$ is a basis, \eqref{eq31} holds with $a_i$ replaced by $b_i$. Replacing $\brac{b_i}$ by a linearly independent subset
$\brac{a_1,a_2,\ldots ,a_m}$ we obtain $\sum r_ia_i=u$ for some $r_i\in\real$ and \eqref{eq31}. Since $\brac{a_i}$ generates $V_1$ we have that $m=r$.
\end{proof}
We conclude from the proof of Corollary~\ref{cor33} that $\dim\fscript =m$ and we call $a_1,a_2,\ldots ,a_m$ in Corollary~\ref{cor33}
\textit{generators} of $\fscript$. Although the $a_i$ are not unique, $m$ is unique.

An effect $a\in\escript$ is \textit{strong} if $a\not\le\lambda u$ for all $\lambda\in\sqparen{0,1}$. If we strengthen the properties of the generators in Corollary~\ref{cor33}, we obtain an interesting special type of CSEA.

\begin{lem}    
\label{lem34}
Let $a_1,a_2,\ldots ,a_m\in\escript$ be strong, linearly independent and satisfy $\sum a_i=u$. Then
\begin{equation}                
\label{eq32}
\fscript =\brac{a\in\escript\colon a=\sum\lambda _ia_i,\quad\lambda _i\in\sqbrac{0,1}}
\end{equation}
is a CSEA of $\escript$.
\end{lem}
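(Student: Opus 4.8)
The plan is to verify the four defining properties of a CSEA for $\fscript$, noting in advance that three of them are immediate from the arithmetic of coefficients in $\sqbrac{0,1}$ and only closure under $\perp$-sums requires real work. Indeed $0$ (all $\lambda_i=0$) and $u=\sum a_i$ (all $\lambda_i=1$) lie in $\fscript$; if $a=\sum\lambda_ia_i\in\fscript$ then $a'=u-a=\sum(1-\lambda_i)a_i$ has coefficients $1-\lambda_i\in\sqbrac{0,1}$ and $a'\in\escript$, so $a'\in\fscript$; and for $\mu\in\sqbrac{0,1}$ we get $\mu a=\sum(\mu\lambda_i)a_i\in\fscript$ since $\mu\lambda_i\in\sqbrac{0,1}$ and $\mu a\in\escript$. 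The one genuinely nontrivial point is the $\perp$-sum: if $a=\sum\lambda_ia_i$ and $b=\sum\mu_ia_i$ lie in $\fscript$ with $a\perp b$, then $a+b=\sum(\lambda_i+\mu_i)a_i\in\escript$, and I must improve the a priori bound $\lambda_i+\mu_i\in\sqbrac{0,2}$ to $\lambda_i+\mu_i\le 1$, which is what places $a+b$ back in $\fscript$.

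The engine for this bound is a family of states dual to the generators, and producing it is the step I expect to be the main obstacle, since it is the only place where the finite-dimensional geometry enters and where the order-theoretic notion \textit{strong} must be converted into a functional. I claim that strongness of $a_k$ yields a state $s_k\in\sscript$ with $s_k(a_k)=1$. Since $a_k\in\escript$ we have $u-a_k\in K$, while strongness says $\lambda u-a_k\notin K$ for every $\lambda\in\sqparen{0,1}$, i.e.\ $(u-a_k)-tu\notin K$ for all small $t>0$; hence $u-a_k$ cannot be interior to $K$ and must lie on its boundary. In the finite-dimensional setting $K$ is closed with $u$ an interior point, so a supporting-hyperplane (Hahn--Banach separation) argument produces a nonzero $f\in V_+^*$ with $f(u-a_k)=0$; because $u$ is an order unit, $f(u)>0$, and $s_k:=f/f(u)$ is the required state, with $s_k(a_k)=s_k(u)=1$. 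Note that I only need one supporting state per generator, so the fact that $\sscript$ need not be order-determining is no impediment.

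Once the $s_k$ are in hand the conclusion is short. As $a_j\in\escript\subseteq K$ we have $s_k(a_j)\ge 0$ for all $j$, and $\sum_j a_j=u$ gives $\sum_j s_k(a_j)=s_k(u)=1$; since $s_k(a_k)=1$ already exhausts this sum of nonnegative terms, $s_k(a_j)=0$ for $j\ne k$. Thus $s_k(a_j)=\delta_{kj}$, so the states act as coordinate functionals: for any $c=\sum\nu_ja_j$ one has $\nu_k=s_k(c)$. Applying this to $c=a+b$ gives $\lambda_k+\mu_k=s_k(a+b)\in\sqbrac{0,1}$, because $a+b\in\escript$ and $s_k\in\sscript$, which is exactly the missing bound and completes the $\perp$-sum property; hence $\fscript$ is a CSEA. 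Equivalently, the identity $\nu_k=s_k(c)$ shows that every effect in the span $V_1=\mathrm{span}\brac{a_1,\ldots,a_m}$ automatically has all coordinates in $\sqbrac{0,1}$, so $\fscript=\escript\cap V_1$, which is a CSEA by Theorem~\ref{thm32}.
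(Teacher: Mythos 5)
Your proof is correct, but it takes a genuinely different route from the paper's. The paper disposes of the key bound $\lambda_i+\mu_i\le 1$ in three lines, entirely inside the order structure: every term of $a+b=\sum_j(\lambda_j+\mu_j)a_j$ lies in the positive cone, so $(\lambda_i+\mu_i)a_i\le a+b\le u$, hence $a_i\le(\lambda_i+\mu_i)^{\minusone}u$, and strongness of $a_i$ forces $\lambda_i+\mu_i\le 1$ directly --- no states, no topology, no finite-dimensionality. You instead dualize: you convert strongness of $a_k$ into the existence of a state $s_k$ with $s_k(a_k)=1$ via a supporting-hyperplane argument, deduce $s_k(a_j)=\delta_{kj}$, and read off the bound as $\lambda_k+\mu_k=s_k(a+b)\in\sqbrac{0,1}$. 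This is sound (one small caveat: your parenthetical claim that $K$ is closed is neither justified by the paper's hypotheses nor needed --- separating a point from a convex set only requires that the set have nonempty interior, which $u$ supplies because $\sqbrac{0,u}$ is generating and $V$ is finite-dimensional), but it is heavier machinery and it genuinely uses finite-dimensionality, whereas the paper's argument is dimension-free. What your route buys in exchange is extra information: the biorthogonal system $s_k(a_j)=\delta_{kj}$ shows at once that \emph{every} effect in the span of the $a_i$ has all coordinates in $\sqbrac{0,1}$, so $\fscript=\escript\cap V_1$ and the lemma also follows from Theorem~\ref{thm32}; and the equivalence ``$a$ is strong if and only if $s(a)=1$ for some state $s$,'' which you establish along the way, is precisely the general form of Lemma~\ref{lem61}.
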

\begin{proof}
The only condition that is not evident is that $a,b\in\fscript$ with $a\perp b$ implies $a+b\in\fscript$. We then suppose that $a,b\in\fscript$ with $a\perp b$ where $a=\sum\lambda _ia_i$, $b=\sum\mu _ia_i$, $\lambda _i,\mu _i\in\sqbrac{0,u}$. Then $a+b\in\escript$ and $a+b=\sum (\lambda _i+\mu _i)a_i$. Since
\begin{equation*}
(\lambda _i+\mu _i)a_i\le a+b\le u
\end{equation*}
we have that $a_i\le (\lambda _i+\mu _i)^\minusone u$ (we can assume $\lambda _i\ne 0$ or $\mu _i\ne 0$). Since $a_i$ is strong, we have that $(\lambda _i+\mu _i)^\minusone\ge 1$ so $\lambda _i+\mu _i\le 1$, $i=1,2,\ldots ,m$. Hence, $a+b\in\fscript$.
\end{proof}

\section{Strong Convex Effect Algebras}  
Motivated by Lemma~\ref{lem34} we say that a CEA $\escript$ is \textit{strong} if there exist a linearly independent set of effects $\brac{a_1,a_2,\ldots ,a_n}$ such that $\sum a_i=u$ and
\begin{equation*}
\escript =\brac{\sum\lambda _ia_i\colon\lambda _i\in\sqbrac{0,1}}
\end{equation*}
We see that the CEA $\fscript$ in Lemma~\ref{lem34} is strong. We call the $a_i$ in the previous definition \textit{generators} of $\escript$. We now show that the generators are automatically strong.

\begin{lem}    
\label{lem41}
If $\escript$ is a strong CEA with generators $\brac{a_i}$, then $a_i$ is strong for all $i$.
\end{lem}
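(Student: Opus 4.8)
The plan is to argue by contradiction, exploiting the linear independence of the generators together with the normalization $\sum a_i=u$. Suppose some generator $a_j$ is not strong, so that $a_j\le\lambda u$ for some $\lambda\in\sqparen{0,1}$. The first step is to observe that the element $\lambda u-a_j$ then lies in $\escript$: it is nonnegative, since $a_j\le\lambda u$ means $\lambda u-a_j\in K$, and it is dominated by $u$, since $\lambda u-a_j\le\lambda u\le u$. Hence $\lambda u-a_j\in\sqbrac{0,u}=\escript$.

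Because $\escript$ is strong, every element of $\escript$ is expressible as $\sum\mu_ia_i$ with each $\mu_i\in\sqbrac{0,1}$, so in particular $\lambda u-a_j=\sum\mu_ia_i$ for suitable $\mu_i\in\sqbrac{0,1}$. The second step is to compute the left-hand side directly from $u=\sum a_i$, which gives $\lambda u-a_j=\sum_{i\ne j}\lambda a_i+(\lambda-1)a_j$. Comparing this expression with $\sum\mu_ia_i$ and invoking the linear independence of $\brac{a_1,\ldots,a_n}$, the coefficients must agree termwise; in particular $\mu_j=\lambda-1$.

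The contradiction now appears: on the one hand $\mu_j\ge 0$ because $\mu_j\in\sqbrac{0,1}$, while on the other hand $\mu_j=\lambda-1<0$ since $\lambda<1$. This is impossible, so no such $\lambda$ exists and $a_j$ is strong; as $j$ was arbitrary, every generator is strong. I expect the only point requiring genuine care to be the first step, namely verifying that $\lambda u-a_j$ really lands in $\escript$ so that the canonical $\sqbrac{0,1}$-representation applies; once that membership is secured, the coefficient matching forced by linear independence makes the contradiction immediate.
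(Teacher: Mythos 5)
Your proof is correct, and it runs on the same engine as the paper's: uniqueness of the coefficients in the expansion $\sum\lambda_ia_i$ forced by linear independence, combined with the requirement that those coefficients lie in $\sqbrac{0,1}$. The difference is the witness element. The paper inflates $a_j$: from $a_j\le\lambda u$ it produces $(1+\mu)a_j\in\escript$ for a suitably small $\mu>0$ and gets a contradiction because the coefficient $1+\mu>1$ violates the \emph{upper} bound. You instead form $\lambda u-a_j$, check it lies in $\sqbrac{0,u}$, and get a contradiction because the coefficient $\lambda-1<0$ violates the \emph{lower} bound. Your route is marginally cleaner: it needs no auxiliary parameter $\mu$ and no inequality of the form $(1+\mu)a_j\le\lambda^{-1}a_j$, and it covers the boundary case $\lambda=0$ permitted by the definition (the paper writes $\lambda\in\sqparen{0,1}=[0,1)$ in the definition of strong but tacitly takes $\lambda\in(0,1)$ in the proof, since it divides by $\lambda$). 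The one step genuinely requiring care is, as you say, the membership $\lambda u-a_j\in\escript$, and your verification is complete: $\lambda u-a_j\in K$ is exactly the meaning of $a_j\le\lambda u$, and $\lambda u-a_j\le\lambda u\le u$ follows from $a_j\ge 0$ and $\lambda\le 1$.
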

\begin{proof}
Suppose that $a_j$ is not strong so that $a_j\le\lambda u$, $\lambda\in (0,1)$. Then
$\lambda ^\minusone a_j\in\escript$ and
$\lambda ^\minusone >1$. Let $0<\mu<1$ with $\mu <(1-\lambda )\lambda ^\minusone$. Then $a_j,\mu a_j\in\escript$ and since
$1+\mu <\lambda ^\minusone$ we have that
\begin{equation*}
a_j+\mu a_j=(1+\mu)a_j\le\lambda ^\minusone a_j
\end{equation*}
Hence, $a_j+\mu a_j\in\escript$ so we have that
\begin{equation*}
(1+\mu )a_j=\sum\lambda _ia_i,\quad\lambda _i\in\sqbrac{0,1}
\end{equation*}
Since representations are unique we conclude that $\lambda _j=1+\mu >1$ which gives a contradiction. Hence, $a_j$ is strong for all $j$.
\end{proof}

An effect $a$ in a CEA $\escript$ is \textit{sharp} if $a\wedge a'=0$ \cite{gud73}. That is, if $b\in\escript$ satisfies $b\le a,a'$ then $b=0$. It is clear that $0$ and $u$ are sharp. Physically, an effect is sharp if it is precisely yes or no when measured \cite{gud73}.

\begin{lem}    
\label{lem42}
If $a\ne 0$ is sharp, then $a$ is strong.
\end{lem}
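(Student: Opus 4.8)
The plan is to argue by contradiction, recycling the scaling technique already used in the proof of Lemma~\ref{lem41}. Suppose $a\ne 0$ is sharp but \emph{not} strong. Then, unwinding the definition of strong, there is some $\lambda\in\sqparen{0,1}$ with $a\le\lambda u$. My goal will be to exhibit a single nonzero effect $b$ satisfying both $b\le a$ and $b\le a'$; such a $b$ witnesses $a\wedge a'\ne 0$ and so contradicts sharpness.

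First I would dispose of the degenerate case $\lambda =0$. Here $a\le 0$ means $-a\in K$, so $a\in K\cap(-K)=\brac{0}$, forcing $a=0$ and contradicting our hypothesis. Hence we may assume $\lambda\in(0,1)$, and in particular $(1-\lambda )/\lambda >0$. The core of the argument is then to set $b=\mu a$ for a small positive scalar $\mu$, chosen with $0<\mu\le 1$ and $(1+\mu )\lambda\le 1$; both constraints can be met simultaneously precisely because $(1-\lambda )/\lambda >0$. The relation $b\le a$ is immediate from $\mu\le 1$, since $a-\mu a=(1-\mu )a\in K$. For $b\le a'=u-a$ I would first establish $(1+\mu )a\le u$: scaling $a\le\lambda u$ by the positive factor $(1+\mu )$ gives $(1+\mu )a\le(1+\mu )\lambda u$, and since $(1+\mu )\lambda\le 1$ with $u\in K$ we have $(1+\mu )\lambda u\le u$, so $(1+\mu )a\le u$ by transitivity. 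Rearranging yields $\mu a\le u-a=a'$. Finally, because $\mu >0$ and $a\ne 0$ we have $b=\mu a\ne 0$, while $b\le a\le u$ places $b$ in $\escript$. Thus $0\ne b\le a,a'$, contradicting $a\wedge a'=0$.

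The only points requiring care are the bookkeeping with the positive cone—converting the numerical inequality $(1+\mu )\lambda\le 1$ into the order relation $(1+\mu )a\le u$ through closure of $K$ under nonnegative scaling and addition—and verifying that the witness $b$ does not collapse to $0$. Neither is a genuine obstacle: the entire argument is essentially the contrapositive of Lemma~\ref{lem41}, repackaged so as to produce a common nonzero lower bound for $a$ and $a'$ rather than to contradict uniqueness of generator representations.
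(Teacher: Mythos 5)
Your argument is correct and follows essentially the same route as the paper: assuming $a\le\lambda u$ with $\lambda\in(0,1)$, one exhibits a nonzero scalar multiple of $a$ lying below both $a$ and $a'$, contradicting sharpness. The paper simply makes the concrete choice $\mu=1-\lambda$ (which always satisfies your constraint, since $(2-\lambda)\lambda\le 1$) and derives $(1-\lambda)a\le(1-\lambda)u\le a'$ directly from $a'\ge(1-\lambda)u$, so your extra bookkeeping with a general $\mu$ and the explicit $\lambda=0$ case is harmless but not needed.
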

\begin{proof}
If $a$ is not strong, then $a\le\lambda u$ for some $\lambda\in (0,1)$. Hence,
\begin{equation*}
a'\ge (\lambda u)'=(1-\lambda )u
\end{equation*}
Now $(1-\lambda )a\le a$ and $(1-\lambda )a\le (1-\lambda )u\le a'$. Since $(1-\lambda )a\ne 0$, $a$ is not sharp.
\end{proof}

An \textit{observable} on a CEA $\escript$ with \textit{finite outcome set} $X$ is a map $A\colon X\to\escript$ satisfying
\begin{equation}                
\label{eq41}
\sum _{x\in X}A(x)=u
\end{equation}
We sometimes write $A=\brac{A(x_1),\ldots ,A(x_n)}$ and interpret $A$ as a measurement with values $x_1,\ldots ,x_n$ such that $A(x_i)$ is the effect that occurs when $A$ has the value $x_i$. For example, the generators of a strong CEA form an observable
$A\colon\brac{1,\ldots ,n}\to\escript$ given by $A(i)=a_i$. The condition \eqref{eq41} says that $A$ must have one of the values $A(x)$,
$x\in X$. If $s$ is a state on $\escript$, then $s(A(x))\in\sqbrac{0,1}$ gives the probability that $A$ has the value $x$ when the system is in state $s$. Of course, this gives a probability measure because $\sum s(A(x))=s(u)=1$. An observable $A$ is \textit{strong} if $A(x)$ are linearly independent and strong. It follows from Lemma~\ref{lem41} that the generators of a strong CEA form a strong observable.

Two effects $a,b\in\escript$ \textit{coexist} if there exist effects $a_1,b_1,c\in\escript$ such that $a_1+b_1+c\in\escript$ and $a=a_1+c$, $b=b_1+c$ \cite{bhss13,hz12}. This terminology stems from the fact that we can then form the observable $A=\brac{a_1,b_1,c,d}$ where $d=(a_1+b_1+c)'$ and we can measure $a$ and $b$ simultaneously by measuring the single observable $A$.

\begin{lem}    
\label{lem43}
If $\escript$ is a strong CEA and $b,c\in\escript$, then $b$ and $c$ coexist.
\end{lem}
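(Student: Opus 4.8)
The plan is to exploit the strong representation of effects and reduce coexistence to a coordinatewise problem in the generator coefficients. Since $\escript$ is strong with generators $a_1,\ldots ,a_n$ (linearly independent, summing to $u$), every effect has the form $\sum\lambda _ia_i$ with $\lambda _i\in\sqbrac{0,1}$; write $b=\sum\beta _ia_i$ and $c=\sum\gamma _ia_i$. Unwinding the definition of coexistence (with $b$ and $c$ playing the roles of $a$ and $b$ there), I must produce effects $p,q,r\in\escript$ with $p+q+r\in\escript$, $b=p+r$ and $c=q+r$. I would look for these in the same generator form, $p=\sum p_ia_i$, $q=\sum q_ia_i$, $r=\sum r_ia_i$, so that the required vector identities become the scalar equations $p_i+r_i=\beta _i$, $q_i+r_i=\gamma _i$ together with the membership constraints $p_i,q_i,r_i,\,p_i+q_i+r_i\in\sqbrac{0,1}$.

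The natural move is to let $r_i$ be the common overlap $\min(\beta _i,\gamma _i)$ and set $p_i=\beta _i-\min(\beta _i,\gamma _i)$, $q_i=\gamma _i-\min(\beta _i,\gamma _i)$. The first two equations then hold by the pointwise identity $\min(\beta _i,\gamma _i)+\paren{\beta _i-\min(\beta _i,\gamma _i)}=\beta _i$, so I am merely verifying a linear identity among the $a_i$ and need not invoke uniqueness of representation: $p+r=\sum(p_i+r_i)a_i=\sum\beta _ia_i=b$, and likewise $q+r=c$. It remains to check membership. Each of $r_i=\min(\beta _i,\gamma _i)$, $p_i=\max(0,\beta _i-\gamma _i)$, and $q_i=\max(0,\gamma _i-\beta _i)$ lies in $\sqbrac{0,1}$ because $\beta _i,\gamma _i\in\sqbrac{0,1}$, so $p,q,r\in\escript$; and the joint coefficient $p_i+q_i+r_i=\beta _i+\gamma _i-\min(\beta _i,\gamma _i)=\max(\beta _i,\gamma _i)$ also lies in $\sqbrac{0,1}$, giving $p+q+r\in\escript$.

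There is no serious obstacle here: the whole content is the observation that in a strong CEA coexistence decouples across the generator coordinates, and that the ``fuzzy intersection'' $r_i=\min(\beta _i,\gamma _i)$ is exactly the right common part. The one point deserving a moment's care is confirming that the joint effect $p+q+r$ stays below $u$, i.e.\ that each joint coefficient is at most $1$; this is precisely the inequality $\max(\beta _i,\gamma _i)\le 1$, which is why taking $r_i$ to be the minimum (rather than forcing $p_i,q_i$ to be small) is the clean choice. I would then conclude that $p,q,r$ are the desired witnesses, so $b$ and $c$ coexist.
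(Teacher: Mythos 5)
Your construction is exactly the paper's: the common part $r=\sum\min(\beta_i,\gamma_i)a_i$ is the paper's $d$, and your $p,q$ are its $b_1=b-d$, $c_1=c-d$, with the same verification that the joint coefficient $\max(\beta_i,\gamma_i)$ stays in $\sqbrac{0,1}$ so the total remains an effect. The proof is correct and essentially identical to the one in the paper.
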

\begin{proof}
Let $\brac{a_i}$ be a set of generators for $\escript$. Then $b=\sum\lambda _ia_i$, $c=\sum\mu _ia_i$,
$\lambda _i,\mu _i\in\sqbrac{0,1}$. Define
\begin{equation*}
d=\sum\min (\lambda _i,\mu _i)a_i\in\escript
\end{equation*}
Then $b_1=b-d$, $c_1=c-d\in\escript$ and
\begin{align*}
b_1+c_1+d&=b+c-d=\sum\sqbrac{\lambda _i+\mu _i-\min (\lambda _i,\mu _i)}a_i\\
  &=\sum\max (\lambda _i,\mu _i)a_i\le u
\end{align*}
Hence, $b_1+c_1+d\in\escript$ and $b=b_1+d$, $c=c_1+d$. Therefore, $b$ and $c$ coexist.
\end{proof}

A \textit{classical channel} between outcome spaces $X$ and $Y$ is given by a stochastic matrix $\nu _{xy}$, $x\in X$, $y\in Y$ with $0\le\nu _{xy}\le 1$ and $\sum _{y\in Y}\nu _{xy}=1$ for all $x\in X$. We interpret $\nu _{xy}$ as the transition probability that outcome $x$ is mapped into outcome $y$ \cite{fghlapp,fhl18, gbca17}. For an observable
$A$ with outcome space $X$ and a classical channel $\nu$ from $X$ to $Y$, we define a new observable $\nu\circ A$ on $Y$ by
\begin{equation*}
(\nu\circ A)(y)=\sum _{x\in X}\nu _{xy}A(x)
\end{equation*}
for all $y\in Y$. Physically, $\nu\circ A$ is interpreted as first measuring $A$ and then employing the classical channel $\nu$ on each measurement outcome \cite{fghlapp,fhl18, gbca17}. For two observables $A$ and $B$, we say that
$B$ is a \textit{postprocessing} of $A$ denoted by $A\to B$ if there exists a classical channel $\nu$ such that
$B=\nu\circ A$.

\begin{thm}    
\label{thm44}
Let $A=\brac{a_i}$ be an observable with linearly independent elements $a_i$ in a CEA $\escript$. Then $\escript$ is strong with generators $a_i$ if and only if every observable in $\escript$ is a postprocessing of $A$.
\end{thm}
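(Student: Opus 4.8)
The plan is to treat the two implications separately, both resting on the uniqueness of representations forced by linear independence of the $a_i$. Since $A$ is an observable with linearly independent elements, I may use throughout that $\sum _ia_i=u$ and that any expression $\sum _i\lambda _ia_i$ determines its coefficients uniquely.

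For the forward direction, suppose $\escript$ is strong with generators $a_i$, and let $B$ be an arbitrary observable with outcome set $Y$. First I would observe that since each $B(y)\in\escript=\brac{\sum\lambda _ia_i\colon\lambda _i\in\sqbrac{0,1}}$, linear independence yields unique scalars $\nu _{iy}\in\sqbrac{0,1}$ with $B(y)=\sum _i\nu _{iy}a_i$. This defines a matrix $\nu=(\nu _{iy})$ whose entries already lie in $\sqbrac{0,1}$. To see that $\nu$ is a classical channel I would sum over $Y$: from $\sum _yB(y)=u=\sum _ia_i$ I get $\sum _i\paren{\sum _y\nu _{iy}}a_i=\sum _ia_i$, and linear independence forces $\sum _y\nu _{iy}=1$ for each $i$. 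Hence $\nu$ is stochastic and $B(y)=(\nu\circ A)(y)$, so $A\to B$.

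For the converse, suppose every observable is a postprocessing of $A$. I would first record the easy inclusion $\brac{\sum\lambda _ia_i\colon\lambda _i\in\sqbrac{0,1}}\subseteq\escript$: if $\lambda _i\in\sqbrac{0,1}$ then $0\le\sum\lambda _ia_i\le\sum a_i=u$, so $\sum\lambda _ia_i\in\sqbrac{0,u}=\escript$. The content is the reverse inclusion, and the key device is to probe $\escript$ one effect at a time. Given $b\in\escript$, form the two-outcome observable $B=\brac{b,b'}$, which is legitimate because $b+b'=u$. Applying the hypothesis, $B=\nu\circ A$ for some classical channel $\nu$ from $\brac{1,\ldots ,n}$ to the two-element outcome set, whence $b=B(1)=\sum _i\nu _{i1}a_i$ with each $\nu _{i1}\in\sqbrac{0,1}$. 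This exhibits $b$ in the required form. Combining the two inclusions gives $\escript=\brac{\sum\lambda _ia_i\colon\lambda _i\in\sqbrac{0,1}}$, and together with the hypotheses that the $a_i$ are linearly independent and $\sum a_i=u$, this is exactly the assertion that $\escript$ is strong with generators $a_i$.

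Neither step conceals a genuine obstacle, but two points deserve care and are where the hypotheses are actually consumed. In the forward direction the stochasticity of $\nu$ is \emph{not} automatic from the entrywise bounds; it is extracted only by pairing $\sum _yB(y)=u$ with linear independence. In the converse the whole argument hinges on selecting the right test observable: the binary observable $\brac{b,b'}$ is the minimal one isolating a prescribed $b$ as a postprocessing image, and it is precisely this freedom to test $\escript$ effect by effect that converts the global postprocessing condition into the pointwise generating representation. I expect the writing to be routine once these two observations are in place.
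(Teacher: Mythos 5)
Your proposal is correct and follows essentially the same route as the paper: in the forward direction you expand each $B(y)$ over the generators and sum over $y$ to extract stochasticity of $\nu$, and in the converse you test the hypothesis on the two-outcome observables $\brac{b,b'}$ to recover the generating representation of every $b\in\escript$. The only divergence is minor and arguably a simplification: you deduce $\sum_y\nu_{xy}=1$ directly from linear independence of the $a_i$, whereas the paper rules out $\sum_y\nu_{xy}>1$ by invoking strongness of the generators and $\sum_y\nu_{xy}<1$ by a positivity argument.
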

\begin{proof}
Let $\escript$ be strong with generators $A=\brac{a_1,a_2,\ldots ,a_n}$. We view $X=\brac{1,2,\ldots ,n}$ as an outcome space and write
\begin{equation*}
A=\brac{a_x\colon x\in X}=\brac{A(x)\colon x\in X}
\end{equation*}
Now let $B=\brac{B(y)\colon y\in Y}$ be another observable in $\escript$. Then $B(y)\in\escript$ so we have that $B(y)=\sum _{x\in X}\nu _{xy}A(x)$ where $0\le\nu _{xy}\le 1$. Since $B$ is an observable, we conclude that
\begin{equation*}
u=\sum _{y\in Y}B(y)=\sum _x\sqbrac{\sum _y\nu _{xy}}A(x)
\end{equation*}
Letting $\mu _x=\sum _{y\in Y}\nu _{xy}$ we have that $\sum _{x\in X}\mu _xA(x)=u$. If $\mu _x>1$ for some
$x\in X$, then since $\mu _xA(x)\le u$ we have that $A(x)\le (\mu _x)^\minusone u$. But this contradicts the fact that $A$ is strong. Hence, $\mu _x\le 1$ for all $x\in X$. If $\mu _x<1$ for some $x\in X$, then
\begin{equation*}
u=\sum _{x\in X}\mu _xA(x)<\sum _{x\in X}A(x)=u
\end{equation*}
which is a contradiction. Hence, $\sum _{y\in Y}\nu _{xy}=\mu _x=1$ for $x\in X$ so $A\to B$. Conversely, suppose every observable in $\escript$ is a postprocessing of $A$. If $a\in\escript$, then $B=\brac{a,a'}$ is an observable in
$\escript$ so $A\to B$. Hence, there exists $\nu _{ij}\in\sqbrac{0,1}$, $i=1,2,\ldots ,n$, $j=1,2$ such that
\begin{equation*}
a=B(1)=\sum\nu _{i1}a_i
\end{equation*}
We conclude that $\escript$ is strong with generators $\brac{a_i}$
\end{proof}

Let $\escript _1=\sqbrac{0,u_1}$, $\escript _2=\sqbrac{0,u_2}$ be CEA's. A \textit{morphism} from $\escript _1$ to
$\escript _2$ is a map $\phi\colon\escript _1\to\escript _2$ satisfying $\phi (u_1)=u_2$ and $a,b\in\escript _1$ with
$a\perp b$ implies that $\phi (a)\perp\phi (b)$ and $\phi (a+b)=\phi (a)+\phi (b)$. A morphism
$\phi\colon\escript _1\to\escript _2$ satisfying $\phi (a)\perp\phi (b)$ implies that $a\perp b$ is a \textit{monomorphism}. It is easy to check that a monomorphism is injective. Also, if $\phi$ is a surjective monomorphism, then
$\phi ^\minusone$ is a morphism and we call $\phi$ an \textit{isomorphism}. If $\phi$ is an isomorphism that satisfies
$\phi (\lambda a)=\lambda\phi (a)$ for all $\lambda\in\sqbrac{0,1}$, $a\in\escript _1$, then $\phi$ is an \textit{affine}
isomorphism and we say that $\escript _1$ and $\escript _2$ are \textit{affinely isomorphic} \cite{gpbb99}.

For $n\in\Natural$, let $\real ^n=\brac{(a_1,a_2,\ldots ,a_n)\colon a_i\in\real}$ be the real linear space with
\begin{equation*}
(a_1,a_2,\ldots ,a_n)+(b_1,b_2,\ldots ,b_n)=(a_1+b_1,a_2+b_2,\ldots ,a_n+b_n)
\end{equation*}
and $\lambda (a_1,a_2,\ldots ,a_n)=(\lambda a_1,\lambda a_2,\ldots ,\lambda a_n)$. Let
\begin{equation*}
K_n=\brac{(a_1,a_2,\ldots ,a_n)\colon a_i\ge 0,i=1,2,\ldots ,n}
\end{equation*}
be a positive cone in $\real ^n$. Letting $u_n=(1,1,\ldots ,1)$ we see that $S_n=\sqbrac{0,u_n}$ is a generating interval for the ordered linear space, $(\real ^n,K_n)$ so $S_n$ is a CEA. It is easy to verify that $a=(a_1,a_2,\ldots ,a_n)\in S_n$ is strong if and only if $a_i=1$ for some $i=1,2,\ldots ,n$ and $a$ is sharp if and only if $a_i=0$ or $1$ for all $i=1,2,\ldots ,n$. We say that a CEA $\escript$ is \textit{classical} if $\escript$ is affinely isomorphic to $S_n$ for some $n\in\Natural$.

\begin{thm}    
\label{thm45}
A CEA $\escript$ is strong if and only if $\escript$ is classical.
\end{thm}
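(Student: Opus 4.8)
The plan is to prove both implications by exhibiting explicit affine maps between $\escript$ and $S_n$, leaning on linear independence of generators to pin down coordinate representations. The routine direction is strong $\Rightarrow$ classical; the substance will sit in classical $\Rightarrow$ strong, where I must recover linear independence of the transported generators.

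First I would assume $\escript$ is strong with generators $a_1,\ldots,a_n$. These are linearly independent by definition, so each $a\in\escript$ has a unique expansion $a=\sum\lambda_ia_i$ with $\lambda_i\in\sqbrac{0,1}$, and I define $\phi\colon\escript\to S_n$ by $\phi(\sum\lambda_ia_i)=(\lambda_1,\ldots,\lambda_n)$. Uniqueness makes $\phi$ well defined, and it is plainly a bijection onto $S_n$ with $\phi(u)=\phi(\sum a_i)=u_n$ and $\phi(\lambda a)=\lambda\phi(a)$ for $\lambda\in\sqbrac{0,1}$. For the algebraic axioms, writing $b=\sum\mu_ia_i$, the relation $a\perp b$ means $a+b=\sum(\lambda_i+\mu_i)a_i\in\escript$, which by uniqueness is equivalent to $\lambda_i+\mu_i\in\sqbrac{0,1}$ for every $i$, i.e.\ to $\phi(a)+\phi(b)\in S_n$; hence $a\perp b$ if and only if $\phi(a)\perp\phi(b)$, and in that case $\phi(a+b)=\phi(a)+\phi(b)$. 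Thus $\phi$ is a surjective affine monomorphism, hence an affine isomorphism, and $\escript$ is classical.

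For classical $\Rightarrow$ strong, I would fix an affine isomorphism and pass to its inverse $\psi\colon S_n\to\escript$, which is again an affine isomorphism, and set $b_i=\psi(e_i)$ for the standard basis vectors $e_i$ of $\real^n$. Pushing $\sum e_i=u_n$ through $\psi$, built up from orthogonal-additivity across the partial sums, each of which lies below $u_n$, gives $\sum b_i=\psi(u_n)=u$. For $s=\sum\lambda_ie_i\in S_n$ the summands $\lambda_ie_i$ are pairwise orthogonal, so $\psi(s)=\sum\lambda_ib_i$; since $\psi$ is onto, this gives $\escript=\brac{\sum\lambda_ib_i\colon\lambda_i\in\sqbrac{0,1}}$.

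The one delicate point, and the main obstacle, is linear independence of the $b_i$. I would note that the linear map $L\colon\real^n\to V$ defined by $L(\lambda_1,\ldots,\lambda_n)=\sum\lambda_ib_i$ coincides with $\psi$ on the cube $S_n=\sqbrac{0,1}^n$ by the computation above, so $L$ is injective on $S_n$ because $\psi$ is a bijection. Since $L$ is linear and $S_n$ has nonempty interior in $\real^n$, injectivity on $S_n$ forces injectivity everywhere: a nonzero $c\in\ker L$ would let me perturb the interior point $p=(\tfrac12,\ldots,\tfrac12)$ to $p+\epsilon c\in S_n$ for small $\epsilon>0$ with $L(p+\epsilon c)=L(p)$, contradicting injectivity on the cube. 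Hence $\ker L=\brac{0}$, the $b_i$ are linearly independent, and $\escript$ is strong with generators $b_i$. Everything outside this injectivity argument is bookkeeping with the morphism axioms.
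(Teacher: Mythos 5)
Your proof is correct and follows essentially the same route as the paper's: the coordinate map onto $S_n$ for the strong $\Rightarrow$ classical direction, and pulling the standard basis vectors $\delta_i$ back through the inverse affine isomorphism for the converse. Your interior-point argument for linear independence of the $b_i$ simply makes explicit a step the paper dispatches tersely with ``since $J$ is an isomorphism, this representation is unique; it follows that the $a_i$'s are linearly independent.''
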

\begin{proof}
Let $\escript$ be a strong CEA with generators $\brac{a_1,a_2,\ldots ,a_n}$. Define $J\colon\escript\to S_n$ by $J(a)=(\lambda _1,\lambda _2,\ldots ,\lambda _n)$ when $a=\sum\lambda _ia_i$. If $a,b\in\escript$ with $a\perp b$ where $a=\sum\lambda _ia_i,b=\sum\mu _ia_i$, then
\begin{equation*}
J(a+b)_i=\lambda _i+\mu _i\le 1
\end{equation*}
so $J(a)\perp J(b)$ and $J(a+b)=J(a+b)$. Also $J(u)=u_n$ and if $J(a)\perp J(b)$ then $a\perp b$. Moreover, we see that $J(\lambda a)=\lambda J(a)$ for all $\lambda\in\sqbrac{0,1}$, $a\in\escript$ and that $J$ is surjective. It follows that $J$ is an affine isomorphism so $\escript$ is classical. Conversely, suppose $\escript$ is a classical CEA and let $J\colon\escript\to S_n$ be an affine isomorphism. Let $\delta _i\in S_n$ be the element satisfying
$\delta _i(j)=\delta _{ij}$, $i,j=1,2,\ldots ,n$ and let $a_i=J^\minusone (\delta _i)\in\escript$. If $a\in\escript$, then there exists a $b\in S_n$ given by
\begin{equation*}
b=(\lambda _1,\lambda _2,\ldots ,\lambda _n)=\sum\lambda _i\delta _i
\end{equation*}
such that
\begin{equation*}
a=J^\minusone (b)=J^\minusone\paren{\sum\lambda _i\delta _i}=\sum\lambda _iJ^\minusone (b_i)
  =\sum\lambda _ia_i
\end{equation*}
Since $J$ is an isomorphism, this representation is unique. It follows that the $a_i$'s are linearly independent. Also,
\begin{equation*}
u=J^\minusone (u_n)=J^\minusone\paren{\sum\delta _i}=\sum J^\minusone (\delta _i)=\sum a_i
\end{equation*}
so $\brac{a_1,a_2,\ldots ,a_n}$ generates $\escript$. Hence, $\escript$ is a strong CEA.
\end{proof}

\section{Informationally Complete Random Variables}  
We now view the CEA $S_n$ of Section~4 in terms of classical probability theory. Let $X=\brac{1,2,\ldots ,n}$ and for every $a=(\lambda _1,\lambda _2,\ldots ,\lambda _n)$ in $S_n$ define the function $f_a\colon X\to\sqbrac{0,1}$ by $f_a(i)=\lambda _i$. We call $f_a$ a \textit{fuzzy event} and $a\mapsto f_a$ maps $S_n$ onto the set of fuzzy events $\fscript (X)$ on $X$. If $a,b\in S_n$ with $a\perp b$, we define $f_{a+b}(i)=f_a(i)+f_b(i)$ and for
$\lambda\in\sqbrac{0,1}$ we define $\lambda f_a=f_{\lambda a}$. Then $\fscript (X)$ becomes a CEA that is affinely isomorphic to $S_n$. If $s=(\mu _1,\mu _2,\ldots,\mu _n)$ is a state on $S_n$, we have the corresponding probability measure on $X$ given by $\mu _s(i)=\mu _i$. Denoting the set of states on $S_n$ by $\sscript _n$ and the set of probability measures on $X$ by $\pscript (X)$ we see that $(S_n,\sscript _n)$ and $\paren{\fscript (X),\pscript (X)}$ essentially coincide. In the literature, $\paren{\fscript (X),\pscript (X)}$ is called a \textit{fuzzy probability space}
\cite{bug96,gpbb99}.

We say that a strong CEA $\escript$ with generators $\brac{a_i}$ is \textit{sharp} if the $a_i$ are sharp,
$i=1,2,\ldots, n$. We can then identify $f_{a_i}$ with the set $\Gamma _i=\brac{j\in X\colon f_{a_i}(j)=1}$. Since
$\sum a_i=u$, we have that $\sum f_{a_i}=\chi _X$. It follows that $f_{a_i}f_{a_j}=0$ for $i\ne j$ so that
$\Gamma _i\cap\Gamma _j=\emptyset$ for $i\ne j$ and $\bigcup\Gamma _i=X$. Moreover, since
\begin{equation*}
\escript =\brac{\sum\lambda _ia_i\colon\lambda _i\in\sqbrac{0,1}}
\end{equation*}
we have that $f=\sum\lambda _if_{a_i}$ for all $f\in\fscript (X)$. Hence, we can assume without loss of generality that $f_{a_i}=\chi _{\brac{i}}$, $i=1,2,\ldots ,n$. The observable $A=\brac{a_i}$ on $S_n$ corresponds to the observable
$A$ on $\fscript (X)$ given by $\ahat (i)=\chi _{\brac{i}}$, $i=1,2,\ldots ,n$. We can thus identify $\ahat$ with the random variable $g_A$ on $X$ given by $g_A(i)=i$. By Theorem~\ref{thm44}, if $B$ is an observable on $\fscript (X)$, then $B$ is a postprocessing of $\ahat$ so we can represent $B$ by a random variable $g_B\colon X\to Y$ for some value space $Y$. In summary, if we have a sharp CEA, then we can represent the observables on $\escript$ by random variables on a set $X=\brac{1,2,\ldots ,n}$ and states on $\escript$ are represented by probability measures on $X$. This reduces the theory to classical probability.

Let $A=\paren{A(x_1),A(x_2),\ldots ,A(x_n)}$ be an observable on a CEA $\escript$. If $s$ is a state on $\escript$, then its \textit{probability distribution} $\Phi _{A,s}$ is given by
$\left\{s\sqbrac{A(x_1)},s\sqbrac{A(x_2)},\right.$ $\left.\ldots ,s\sqbrac{A(x_n)}\right\}$. We say that a collection of observables
$\brac{A_1,A_2,\ldots ,A_m}$ is \textit{informationally complete} if for any states
$s_1,s_2,\Phi _{A_i,s}=\Phi _{A_i,s_2}$, $i=1,2,\ldots ,m$, implies that $s_1=s_2$. Single informationally complete observables in Hilbert space quantum mechanics are well understood \cite{bus91,hz12}. However, this is not true for larger sets of observables. For example, we would like to characterize pairs of observables $(A_1,A_2)$ that are informationally complete where $A_1$ and $A_2$ are not.

In this section, we consider informationally complete observables in sharp CEA's. Of course, this is a very strong restriction, but it may give some insights for the general case. We then have the following situation. Let $X=\brac{1,2,\ldots ,n}$ and let $\pscript (X)$ be the set of probability measures on $X$. Every $\mu\in\pscript (X)$ has the form $\mu =(\mu _1,\mu _2,\ldots ,\mu _n)$ where $\mu _i\in\sqbrac{0,1}$, $\sum\mu _i=1$. Let $\rscript (X)$ be the set of random variables $f\colon X\to\real$. For $f\in\rscript (X)$, $\mu\in\pscript (X)$ the \textit{probability distribution} is the measure on sets $\Delta\subseteq\real$ given by
\begin{equation*}
\Phi _{f,\mu}(\Delta )=\mu\sqbrac{f^\minusone (\Delta )}=\sum\brac{\mu _i\colon f(i)\in\Delta}
\end{equation*}
We say that $f\in\rscript (X)$ is \textit{informationally complete} (IC) if $\Phi _{f,\mu}=\Phi _{f,\nu}$ implies $\mu =\nu$. We use the notation $\ab{X}=n$ and assume that $n\ge 2$.

\begin{exam}{1}  
We show that when $\ab{X}=2$, then $f$ is IC if and only if $f$ is not constant. If $f$ is constant, then $f(x_i)=\lambda$, $i=1,2$ and we have that
\begin{equation*}
\Phi _{f,\mu}\paren{\brac{\lambda}}=\Phi _{f,\nu}\paren{\brac{\lambda}}=1
\end{equation*}
It follows that $\Phi _{f,\mu}=\Phi _{f,\nu}$ for all probability measures $\mu$ and $\nu$ so $f$ is not IC. If $f$ is not constant, then $f(1)\ne f(2)$ and we have that $\Phi _{f,\mu}\paren{\brac{f(1)}}=\mu _1$,
$\Phi _{f,\mu}\paren{\brac{f(2)}}=\mu _2$ for any $\mu\in\pscript (X)$. If $\nu\in\pscript (X)$ is another probability measure with $\nu =(\nu _1,\nu _2)$ and $\Phi _{f,\mu}=\Phi _{f,\nu}$ then
\begin{align*}
\mu _1&=\Phi _{f,\mu}\paren{\brac{f(1)}}=\Phi _{f,\nu}\paren{\brac{f(1)}}=\nu _1\\
\mu _2&=\Phi _{f,\mu}\paren{\brac{f(2)}}=\Phi _{f,\nu}\paren{\brac{f(2)}}=\nu _2
\end{align*}
Hence, $\mu =\nu$ so $f$ is IC.\hfill\qedsymbol
\end{exam}

In general we have the following result.

\begin{thm}    
\label{thm51}
A random variable $f\in\rscript (X)$ is IC if and only if $f$ is injective.
\end{thm}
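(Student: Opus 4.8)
The plan is to prove both implications directly, exploiting the fact that the point mass at a single outcome and the singleton subsets of $\real$ give complete access to the individual weights $\mu_i$ whenever $f$ separates the outcomes. This is the natural generalization of the argument already carried out in Example~1 for the case $\ab{X}=2$.

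First I would prove that injectivity implies IC. Suppose $f\in\rscript (X)$ is injective and fix $\mu ,\nu\in\pscript (X)$ with $\Phi _{f,\mu}=\Phi _{f,\nu}$. For each $i\in X$ I would evaluate both distributions on the singleton $\Delta =\brac{f(i)}$. Because $f$ is injective, the only $j\in X$ with $f(j)=f(i)$ is $j=i$, so the defining formula $\Phi _{f,\mu}(\Delta )=\sum\brac{\mu _j\colon f(j)\in\Delta}$ collapses to
\begin{equation*}
\Phi _{f,\mu}\paren{\brac{f(i)}}=\mu _i.
\end{equation*}
The same computation gives $\Phi _{f,\nu}\paren{\brac{f(i)}}=\nu _i$, so the hypothesis $\Phi _{f,\mu}=\Phi _{f,\nu}$ forces $\mu _i=\nu _i$ for every $i$, whence $\mu =\nu$. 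Thus $f$ is IC.

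For the converse I would argue by contraposition: if $f$ is not injective, then $f$ is not IC. Choose distinct indices $i\ne j$ in $X$ with $f(i)=f(j)$, and let $\mu$ and $\nu$ be the point masses concentrated at $i$ and at $j$ respectively, i.e.\ $\mu _i=1$ and $\nu _j=1$ with all other coordinates zero. These are distinct elements of $\pscript (X)$. For any $\Delta\subseteq\real$ the formula gives $\Phi _{f,\mu}(\Delta )=1$ exactly when $f(i)\in\Delta$ and $\Phi _{f,\nu}(\Delta )=1$ exactly when $f(j)\in\Delta$; since $f(i)=f(j)$ these two conditions coincide, so $\Phi _{f,\mu}=\Phi _{f,\nu}$ while $\mu\ne\nu$. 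Hence $f$ is not IC, completing the proof.

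I do not expect a genuine obstacle here: the assumption $n\ge 2$ guarantees the counterexample measures exist, and both directions reduce to elementary manipulations of the distribution formula. The only point requiring care is the preimage computation in the forward direction, namely that injectivity makes $f^{\minusone}\paren{\brac{f(i)}}=\brac{i}$ so that exactly one weight survives; once this is observed, the argument is immediate.
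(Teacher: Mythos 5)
Your proof is correct and follows essentially the same route as the paper: singletons $\brac{f(i)}$ recover the weights $\mu_i$ in the injective case, and a pair of distinct measures supported on two outcomes identified by $f$ witnesses the failure of IC in the non-injective case. The only (immaterial) difference is your choice of point masses at $i$ and $j$ where the paper uses $(1/2,1/2,0,\ldots,0)$ and $(1/4,3/4,0,\ldots,0)$.
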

\begin{proof}
If $f(i)\ne f(j)$ for all $i\ne j$ and $\Phi _{f,\mu}=\Phi _{f,\nu}$ for $\mu ,\nu\in\pscript (X)$ then
\begin{equation*}
\mu _i=\Phi _{f,\mu}\paren{\brac{f(i)}}=\Phi _{f,\nu}\paren{\brac{f(i)}}=\nu _i
\end{equation*}
for all $i\in\brac{1,2,\ldots ,n}$. Hence, $\mu =\nu$ so $f$ is IC. Conversely, suppose $f$ is not injective and assume without loss of generality that $f(1)=f(2)$. Let $\mu =(1/2,1/2,0,\ldots ,0)$, $\nu =(1/4,3/4,0,\ldots ,0)$. Then
$\mu\ne\nu$ but
\begin{equation*}
\Phi _{f,\mu}\paren{\brac{f(1)}}=\Phi _{f,\nu}\paren{\brac{f(1)}}=1
\end{equation*}
so $\Phi _{f,\mu}=\Phi _{f,\nu}$. Hence, $f$ is not IC.
\end{proof}

A set of two random variables $\brac{f,g}$ is IC if $\Phi _{f,\mu}=\Phi _{f,\nu}$ and $\Phi _{g,\mu}=\Phi _{g,\nu}$ imply that $\mu =\nu$. Of course, if $f$ is IC, then $\brac{f,g}$ is IC for $g\in\rscript (X)$. The interesting case is when $\brac{f,g}$ is IC and neither $f$ nor $g$ is IC.

\begin{exam}{2}  
We show that when $\ab{X}=2$, then $\brac{f,g}$ is IC if and only if either $f$ or $g$ is IC. Indeed, if either $f$ or $g$ is IC then of course $\brac{f,g}$ is IC. Conversely, suppose both $f$ and $g$ are not IC. Then by Example~1, $f$ and $g$ are constant so $f(1)=f(2)$ and $g(1)=g(2)$. Let $\mu =(\mu _1,\mu _2)$, $\nu =(\nu _1,\nu _2)$ be probability measures. Then
\begin{align*}
\Phi _{f,\mu}\paren{\brac{f(1)}}&=\Phi _{f,\nu}\paren{\brac{f(1)}}=1\\
    \intertext{and}
\Phi _{g,\mu}\paren{\brac{g(1)}}&=\Phi _{g,\nu}\paren{\brac{g(1)}}=1
\end{align*}
Hence, $\Phi _{f,\mu}=\Phi _{f,\nu}$ and $\Phi _{g,\mu}=\Phi _{g,\nu}$ but $\mu\ne\nu$ in general so $\brac{f,g}$ is not IC.\hfill\qedsymbol
\end{exam}

\begin{exam}{3}  
Let $\ab{X}=3$ and let $f,g\in\rscript (X)$ where neither $f$ nor $g$ is IC. By Theorem~\ref{thm51}, $f$ and $g$ are not injective. If $f(1)=f(2)$ and $g(1)=g(2)$ then $\brac{f,g}$ is not IC. Indeed, let $\mu ,\nu\in\pscript (X)$ with
$\mu _1\ne\nu _1$ but $\mu _1+\mu _2=\nu _1+\nu _2$. We have that
\begin{align*}
\Phi _{f,\mu}\paren{\brac{f(1)}}&=\mu _1+\mu _2=\Phi _{g,\mu}\paren{\brac{g(1)}}\\
\Phi _{f,\nu}\paren{\brac{f(1)}}&=\nu _1+\nu _2=\Phi _{g,\nu}\paren{\brac{g(1)}}\\
\Phi _{f,\mu}\paren{\brac{f(3)}}&=\mu _3=\Phi _{g,\mu}\paren{\brac{g(3)}}\\
\Phi _{f,\nu}\paren{\brac{f(3)}}&=\nu _3=\Phi _{g,\nu}\paren{\brac{g(3)}}
\end{align*}
Since $\mu _3=\nu _3$ we have that $\Phi _{f,\mu}=\Phi _{f,\nu}$ and $\Phi _{g,\mu}=\Phi _{g,\nu}$ but $\mu\ne\nu$. Hence, $\brac{f,g}$ is not IC. On the other hand if $f(1)=f(2)\ne f(3)$ and $g(1)\ne g(2)=g(3)$, then $\brac{f,g}$ is IC. In this case, for $\mu ,\nu\in\pscript (X)$ we have that
\begin{align*}
\Phi _{f,\mu}\paren{\brac{f(1)}}&=\mu _1+\mu _2,\quad\Phi _{f,\mu}\paren{\brac{f(3)}}=\mu _3\\
\Phi _{f,\nu}\paren{\brac{f(1)}}&=\nu _1+\nu _2,\quad\Phi _{f,\nu}\paren{\brac{f(3)}}=\nu _3\\
\Phi _{g,\mu}\paren{\brac{g(1)}}&=\mu _1,\quad\Phi _{g,\mu}\paren{\brac{g(2)}}=\mu _2+\mu _3\\
\Phi _{g,\nu}\paren{\brac{g(1)}}&=\nu _1,\quad\Phi _{g,\nu}\paren{\brac{g(2)}}=\nu _2+\nu _3
\end{align*}
If $\Phi _{f,\mu}=\Phi _{f,\nu}$ and $\Phi _{g,\mu}=\Phi _{g,\nu}$ then $\mu _1+\mu _2=\nu_1+\nu _2$,
$\mu _3=\nu _3$ and $\mu _1=\nu _1$, $\mu _2+\mu _3=\nu _2+\nu _3$. Hence, $\mu =\nu$ so $\brac{f,g}$ is IC.
\hfill\qedsymbol
\end{exam}

A random variable $f\in\rscript (X)$ gives a partition of $X=\brac{1,2,\ldots ,n}$ where $\brac{i}$ is a singleton if
$f(i)\ne f(j)$ for any $j\ne i$, $\brac{i,j}$ is a doubleton if $f(i)=f(j)$ and $f(i)\ne f(k)$ for any $k\ne i,j$, etc. We denote the partition for $f$ by $P(f)$. If $A$ and $B$ are partitions of $X$, then the set of intersections of sets in $A$ with sets in $B$ (omitting the empty set) is denoted by $A\cap B$. We say that two random variables $f,g\in\rscript (X)$ are \textit{complementary} if $P(f)\cap P(g)$ consists of singleton sets.

\begin{exam}{4}  
Let $f,g$ be the first two random variables in Example~3. Then $P(f)=\brac{\brac{1,2},\brac{3}}$,
$P(g)=P(f)=P(f)\cap P(g)$ and $f,g$ are not complementary. Next, let $f,g$ be the second two random variables in Example~3. Then $P(f)=\brac{\brac{1,2},\brac{3}}$, $P(g)=\brac{\brac{1},\brac{2,3}}$ and we have
\begin{equation*}
P(f)\cap P(g)=\brac{\brac{1},\brac{2},\brac{3}}
\end{equation*}
so $f,g$ are complementary. Recall that in the first case, $\brac{f,g}$ were not IC while in the second case $\brac{f,g}$ were IC. As we shall see, this is no accident. As another example, let $h_1\in\rscript (X)$, where $\ab{X}=5$, with different values $h_1(1),h_1(2),\ldots ,h_1(5)$ except $h_1(1)=h_1(5)$ and $h_1(3)=h_1(4)$. We then have
\begin{equation*}
P(h_1)=\brac{\brac{1,5},\brac{2},\brac{3,4}}
\end{equation*}
If $h_2\in\rscript (X)$ satisfies $h_2(2)=h_2(3)=h_2(4)$ and the other values are different we have
\begin{equation*}
P(h_2)=\brac{\brac{1},\brac{2,3,4},\brac{5}}
\end{equation*}
Hence, $P(h_1)\cap P(h_2)=\brac{\brac{1},\brac{2},\brac{3,4},\brac{5}}$ so $h_1,h_2$ are not complementary.
\hfill\qedsymbol
\end{exam}

We say that $f,g\in\rscript (X)$ are \textit{strongly complementary} if for all $i\in X$, either $\brac{i}\in P(f)$ or
$\brac{i}\in P(g)$. For instance, none of the pairs of random variables in Example~4 are strongly complementary. An example of a strongly complementary pair is given by the partitions
\begin{equation*}
P(f)=\brac{\brac{1,2},\brac{3},\brac{4}},\quad P(g)=\brac{\brac{1}\brac{2},\brac{3,4}}
\end{equation*}
It is easy to check that a pair that is strongly complementary must be complementary. Also, $f,g\in\rscript (X)$ are strongly complementary if and only if for all $i\in X$ either $f(i)\ne f(j)$ or $g(i)\ne g(j)$ for all $j\in X$ with $j\ne i$.

\begin{thm}    
\label{thm52}
{\rm{(i)}}\enspace If $f,g\in\rscript (X)$ are strongly complementary, then $f,g$ are IC.
{\rm{(ii)}}\enspace If $f,g$ are IC, then $f,g$ are complementary.
\end{thm}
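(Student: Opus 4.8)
\section*{Proof proposal}

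The plan is to prove the two implications separately, using throughout the reformulation that $\Phi _{f,\mu}=\Phi _{f,\nu}$ holds if and only if $\mu$ and $\nu$ assign the same total mass to each block of the partition $P(f)$. This holds because a finitely supported measure on $\real$ is determined by its values on singletons $\brac{v}$, and $\Phi _{f,\mu}\paren{\brac{v}}=\sum\brac{\mu _i\colon f(i)=v}$ is exactly the $\mu$-mass of the block of $P(f)$ on which $f$ takes the value $v$. Once this is in hand, both parts fall out quickly.

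For part (i) I would argue directly. Suppose $f,g$ are strongly complementary and that $\Phi _{f,\mu}=\Phi _{f,\nu}$ and $\Phi _{g,\mu}=\Phi _{g,\nu}$. Fix $i\in X$. By strong complementarity either $\brac{i}\in P(f)$ or $\brac{i}\in P(g)$; say $\brac{i}\in P(f)$, so that $f^\minusone\paren{\brac{f(i)}}=\brac{i}$. Then $\mu _i=\Phi _{f,\mu}\paren{\brac{f(i)}}=\Phi _{f,\nu}\paren{\brac{f(i)}}=\nu _i$, and the symmetric argument with $g$ settles the other case. Since $i\in X$ was arbitrary, $\mu =\nu$, so $\brac{f,g}$ is IC.

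For part (ii) I would prove the contrapositive: if $f,g$ are not complementary, then $\brac{f,g}$ is not IC. Not being complementary means $P(f)\cap P(g)$ contains a set with at least two points, so there are indices $i\ne j$ with $f(i)=f(j)$ and $g(i)=g(j)$. The idea is to shift mass between $i$ and $j$. Take $\mu\in\pscript (X)$ with $\mu _i=\mu _j=\tfrac12$ and all other coordinates $0$, and let $\nu$ agree with $\mu$ except $\nu _i=\tfrac34$, $\nu _j=\tfrac14$. Then $\nu\in\pscript (X)$ and $\mu\ne\nu$. Since $i$ and $j$ lie in the same block of $P(f)$, the transferred mass cancels within that block and every block mass of $P(f)$ is unchanged, so by the reformulation $\Phi _{f,\mu}=\Phi _{f,\nu}$; likewise $g(i)=g(j)$ gives $\Phi _{g,\mu}=\Phi _{g,\nu}$. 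Hence $\brac{f,g}$ is not IC.

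Neither implication presents a serious obstacle; the one point requiring care is the reformulation of equality of probability distributions as equality of block masses of the induced partitions. That reformulation is what drives the coordinate-recovery argument in (i) and makes the mass-transfer construction in (ii) transparent, so I would state and verify it first.
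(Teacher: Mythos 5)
Your proposal is correct and follows essentially the same route as the paper: part (i) recovers each coordinate $\mu_i$ from whichever of $f,g$ isolates $i$, and part (ii) proves the contrapositive by shifting mass between two indices lying in a common block of both $P(f)$ and $P(g)$ (the paper uses the point masses $(1,0,\ldots,0)$ and $(0,1,0,\ldots,0)$ rather than your $\tfrac12,\tfrac12$ versus $\tfrac34,\tfrac14$, but the construction is the same). Your explicit preliminary reformulation of $\Phi_{f,\mu}=\Phi_{f,\nu}$ as equality of block masses of $P(f)$ is left implicit in the paper but is accurate and harmless.
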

\begin{proof}
(i)\enspace Suppose $f,g\in\rscript (X)$ are strongly complementary. Let $\mu =(\mu _1,\mu _2,\ldots ,\mu _n)$ and
$\nu =(\nu _1,\nu _2,\ldots ,\nu _n)$ be states in $\pscript (X)$ and suppose that $\Phi _{f,\mu}=\Phi _{f,\nu}$ and
$\Phi _{g,\mu}=\Phi _{g,\nu}$. For $i\in X$, either $\brac{i}\in P(f)$ or $\brac{i}\in P(g)$. In the former case, we have
\begin{equation*}
\Phi _{f,\mu}\paren{\brac{f(i)}}=\mu _i=\Phi _{f,\nu}\paren{\brac{f(i)}}=\nu _i
\end{equation*}
while in the latter case, we have
\begin{equation*}
\Phi _{g,\mu}\paren{\brac{g(i)}}=\mu _i=\Phi _{g,\nu}\paren{\brac{g(i)}}=\nu _i
\end{equation*}
In this way, $\mu _i=\nu _i$, $i=1,2,\ldots ,n$ so $\mu =\nu$.\newline
(ii)\enspace Suppose that $f,g$ are not complementary. Without loss of generality, we can assume that $f(1)=f(2)$, $g(1)=g(2)$ while the other values of $f$ and $g$ are arbitrary. Let $\mu =(1,0,\ldots ,0)$ and $\nu = (0,1,0,\ldots ,0)$ be states on $X$. Then
\begin{align*}
\Phi _{f,\mu}\paren{\brac{f(1)}}&=1=\Phi _{f,\nu}\paren{\brac{f(1)}}\\
    \intertext{and}
\Phi _{g,\mu}\paren{\brac{g(1)}}&=1=\Phi _{g,\nu}\paren{\brac{g(1)}}
\end{align*}
But $\mu\ne\nu$ so $f,g$ are not IC.
\end{proof}

The definitions and Theorem~\ref{thm52} extend to more than two random variables in a natural way. In the second illustration of Example~3, $f,g$ are IC but $f,g$ are not strongly complementary. This shows that the converse of Theorem~\ref{thm52}(i) is false. Hence, strong complementary is a sufficient but not necessary condition for IC.

\begin{exam}{5}  
This example shows that the converse of Theorem~\ref{thm52}(ii) is false. Let $X=\brac{1,2,3,4}$ and suppose
$f,g\in\rscript (X)$ with
\begin{equation*}
P(f)=\brac{\brac{1,2},\brac{3,4}},\quad P(g)=\brac{\brac{1,3},\brac{2,4}}
\end{equation*}
Then $f,g$ are complementary. To show that $f,g$ are not IC, consider the distinct states
\begin{equation*}
\mu =\paren{\tfrac{1}{4},\tfrac{1}{4},\tfrac{1}{4},\tfrac{1}{4}},\quad
\nu =\paren{\tfrac{1}{3},\tfrac{1}{6},\tfrac{1}{6},\tfrac{1}{3}}
\end{equation*}
We then have that
\begin{align*}
\Phi _{f,\mu}\paren{\brac{f(1)}}&=\tfrac{1}{4}+\tfrac{1}{4}=\tfrac{1}{2}=\tfrac{1}{3}+\tfrac{1}{6}
  =\Phi _{f,\nu}\paren{\brac{f(1)}}\\
\Phi _{f,\mu}\paren{\brac{f(3)}}&=\tfrac{1}{4}+\tfrac{1}{4}=\tfrac{1}{2}=\tfrac{1}{6}+\tfrac{1}{3}
  =\Phi _{f,\nu}\paren{\brac{f(3)}}\\
\Phi _{g,\mu}\paren{\brac{g(1)}}&=\tfrac{1}{4}+\tfrac{1}{4}=\tfrac{1}{2}=\tfrac{1}{3}+\tfrac{1}{6}
  =\Phi _{g,\nu}\paren{\brac{g(1)}}\\
\Phi _{g,\mu}\paren{\brac{g(2)}}&=\tfrac{1}{4}+\tfrac{1}{4}=\tfrac{1}{2}=\tfrac{1}{6}+\tfrac{1}{3}
  =\Phi _{g,\nu}\paren{\brac{g(2)}}
\end{align*}
Hence, $f,g$ are not IC.\hfill\qedsymbol
\end{exam}

We conclude from Example~5 that complementarity is a necessary but not a sufficient condition for IC. It is an open problem to find a simple characterization of IC for a pair $\brac{f,g}\subseteq\rscript (X)$.

\section{Quantum Convex Effect Algebras}  
In this section we briefly consider CEA's on a Hilbert space. A more complete discussion is given in \cite{fghlapp}. Let $H$ be a finite-dimensional complex Hilbert space and let $\lscript _S(H)$ be the real linear space of self-adjoint operators on $H$. We can order the elements of $\lscript _S(H)$ using the cone of positive operators
$K\subseteq\lscript _S(H)$. Letting $I$ be the identity operator, we construct the CEA, $\escript (H)=\sqbrac{0,I}$ where $\sqbrac{0,I}$ is a generating interval in $\lscript _S(H)$. We call $\escript (H)$ a \textit{full quantum} CEA and the elements of $\escript (H)$ are called \textit{quantum effects}. The states on $\escript (H)$ are precisely the density operators on $H$; that is, the operators $\rho\in K$ with $\rmtr (\rho )=1$.  We then have that $\rho (a)=\rmtr (\rho a)$ for all $a\in\escript (H)$. Any CSEA of $\escript (H)$ is called a \textit{quantum} CSEA. It can be shown that
$a\in\escript (H)$ is sharp if and only if $a$ is a projection \cite{gpbb99}. We denote the spectrum of $a\in\escript (H)$ by
$\sigma (a)$.

\begin{lem}    
\label{lem61}
A quantum effect $a\in\escript (H)$ is strong if and only if $1\in\sigma (a)$.
\end{lem}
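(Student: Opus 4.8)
The plan is to translate both conditions into statements about the eigenvalues of $a$ and to observe that they coincide. Since strongness of $a$ means $a\not\le\lambda I$ for every $\lambda\in\sqparen{0,1}$, I would find it cleanest to work with the contrapositive: I will show that $a$ is \emph{not} strong precisely when the largest point of $\sigma(a)$ is strictly below $1$, which in turn is the same as $1\notin\sigma(a)$.

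First I would record the relevant spectral facts. Because $a\in\escript(H)=\sqbrac{0,I}$, the operator $a$ is self-adjoint with $0\le a\le I$, so in the finite-dimensional space $\lscript_S(H)$ it is diagonalizable and $\sigma(a)$ is a finite nonempty subset of $\sqbrac{0,1}$. Writing $\lambda_{\max}=\max\sigma(a)$ (the maximum being attained since $\sigma(a)$ is finite), the elementary fact I would invoke is that for $\lambda\in\real$,
\[
a\le\lambda I\iff\lambda I-a\ge 0\iff\lambda_{\max}\le\lambda,
\]
which holds because a self-adjoint operator lies in the positive cone $K$ exactly when all of its eigenvalues are nonnegative.

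Next I would combine these. From the displayed equivalence, $a$ fails to be strong if and only if there exists $\lambda\in\sqparen{0,1}$ with $\lambda_{\max}\le\lambda$. Such a $\lambda$ exists if and only if $\lambda_{\max}<1$: when $\lambda_{\max}<1$ one may take $\lambda=\lambda_{\max}\in\sqparen{0,1}$, while any admissible $\lambda$ forces $\lambda_{\max}\le\lambda<1$. Negating, $a$ is strong if and only if $\lambda_{\max}=1$. Finally, since $\sigma(a)\subseteq\sqbrac{0,1}$, the equality $\lambda_{\max}=1$ is equivalent to $1\in\sigma(a)$: if $1\in\sigma(a)$ then $\lambda_{\max}=1$ because no eigenvalue exceeds $1$, and if $\lambda_{\max}=1$ then $1$ is itself an eigenvalue. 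Chaining these equivalences yields the claim.

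There is no real obstacle here; the single point that needs care is the middle equivalence $\lambda I-a\ge 0\iff\lambda_{\max}\le\lambda$, that is, that the order on $\lscript_S(H)$ induced by $K$ is faithfully captured by the eigenvalues of $a$. In finite dimensions this follows at once from the spectral decomposition of $a$, so the entire statement reduces to this standard spectral observation together with the bookkeeping over the half-open interval $\sqparen{0,1}$.
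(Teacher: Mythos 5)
Your proposal is correct and rests on the same ingredients as the paper's proof: the spectral theorem for $a\in\sqbrac{0,I}$ and the fact that $a\le\lambda I$ exactly when every eigenvalue of $a$ is at most $\lambda$ (the paper phrases the forward direction via a unit eigenvector of the eigenprojection for $1$ and the converse via $a\le\doubleab{a}I$, which is your $\lambda_{\max}$ argument split into two halves). Your single chain of equivalences through $\lambda_{\max}$ is just a slightly more unified bookkeeping of the identical idea, so no substantive difference.
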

\begin{proof}
Suppose that $1\in\sigma (a)$. By the spectral theorem $a=p+b$ where $p$ is a one-dimensional projection and
$b\in\escript (H)$. If $a$ is not strong, then $a\le\lambda I$, $\lambda\in\sqparen{0,1}$. Hence, $p\le a\le\lambda I$. Let $\phi$ be a unit eigenvector of $p$ with corresponding eigenvalue $1$ so that $p\phi =\phi$. Then
\begin{equation*}
1=\elbows{\phi ,p\phi}\le\lambda\elbows{\phi ,\phi}=\lambda
\end{equation*}
which is a contradiction. Hence, $a$ is strong. Conversely, suppose that $a\in\escript (H)$ is strong. If $1\not\in\sigma (a)$, then $\doubleab{a}<1$. Since $a\le\doubleab{a}I$, this gives a contradiction. Hence $\in\sigma (a)$.
\end{proof}

It follows from Lemma~\ref{lem61} that strong effects need not be sharp.

It is not hard to show that if $\dim H=n$, then $\dim\lscript _S(H)=n^2$. Then for any $m\le n^2$ we can construct a CSEA $\fscript\subseteq\escript (H)$ with $\dim\fscript =m$. We say that a quantum CSEA is \textit{commutative} if all its elements commute. Of course, $\fscript$ is commutative if and only if its generators mutually commute. It is also clear, any full CEA is noncommutative. If a quantum CSEA $\fscript$ satisfies $\dim\fscript =2$, then $\fscript$ is commutative. This is because,its generators $a_1,a_2$ satisfy $r_1a_1+r_2a_2=I$ for some $r_1,r_2\in\real$ which implies $a_1a_2=a_2a_1$. We now give an example of a 3-dimensional noncommutative quantum CSEA.

\begin{exam}{6}  
Let $\alpha ,\beta\in\escript (\complex ^2)$ satisfy $\alpha\beta\ne\beta\alpha$ and
$0\not\in\sigma (\alpha ),\sigma (\beta )$. Letting $a_1=\tfrac{\alpha}{2}$, $a_2=\tfrac{\beta}{2}$,
$a_3=I-\tfrac{\alpha}{2}-\tfrac{\beta}{2}$ we have that $a_1,a_2,a_3\in\escript (\complex ^2)$ and $a_1+a_2+a_3=I$ so $A=\brac{a_1,a_2,a_3}$ is an observable. It is easy to check that  the $a_i$'s do not commute and are linearly independent. Hence, the quantum CSEA generated by $A$ is noncommutative. Notice that
$0,1\not\in\sigma (a_1),\sigma (a_2)$. If $0\in\sigma (a_3)$, then there exists a unit vector $\phi\in\complex ^2$ such that $\tfrac{1}{2}\,\elbows{\phi ,\alpha\phi}+\tfrac{1}{2}\elbows{\phi ,\beta\phi}=1$. But then
$\elbows{\phi ,\alpha\phi}=\elbows{\phi ,\beta\phi}=1$. This implies that $1\in\sigma (a_1)$ which is a contradiction. If $1\in\sigma (a_3)$, then there exists a unit vector $\psi\in\complex ^2$ such that
$\tfrac{1}{2}\,\elbows{\phi ,\alpha\phi}+\tfrac{1}{2}\,\elbows{\phi ,\beta\phi}=0$. As before, this implies that
$0\in\sigma (a_1)$ which is a contradiction. We conclude that $0,1\not\in\sigma (a_3)$ so $a_1,a_2,a_3$ are not strong.\hfill\qedsymbol
\end{exam}

The next result characterizes the strong quantum CSEA's.

\begin{thm}    
\label{thm62}
Let $a_1,\ldots ,a_m$ be generators for a strong CSEA $\fscript\subseteq\escript (H)$ where $\dim H=n$. Then
$m\le n$, there exist nonzero projections $P_i$, $i=1,\ldots ,m$ and a projection $Q$ with $P_1+\cdots +P_m+Q=I$ such that $a_i=P_i+Qa_iQ$ and $0,1\not\in\sigma (Qa_iQ)$.
\end{thm}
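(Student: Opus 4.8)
The plan is to extract from each generator its sharp eigenvalue-$1$ part and to show that these parts assemble into an orthogonal resolution of $I$. By Lemma~\ref{lem41} every generator $a_i$ is strong, so $1\in\sigma(a_i)$ by Lemma~\ref{lem61}; let $P_i$ be the spectral projection of $a_i$ for the eigenvalue $1$, which is nonzero and satisfies $a_iP_i=P_ia_i=P_i$. The first step is to prove the $P_i$ are mutually orthogonal. I would use the relation $\sum_ja_j=I$ evaluated on a unit vector $\phi\in P_iH$: since $\elbows{\phi,a_i\phi}=1$ and each $a_j\ge 0$, the identity $\sum_j\elbows{\phi,a_j\phi}=1$ forces $\elbows{\phi,a_j\phi}=0$, hence $a_j\phi=0$, for every $j\ne i$. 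As $\phi$ then lies in the $0$-eigenspace of $a_j$, it is orthogonal to $P_jH$, so $P_iH\perp P_jH$. Consequently $\sum_iP_i$ is a projection, $Q:=I-\sum_iP_i$ is a projection with $P_1+\cdots+P_m+Q=I$, and the $m$ nonzero mutually orthogonal projections $P_i$ occupy orthogonal nonzero subspaces of the $n$-dimensional space $H$, giving $m\le n$.

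Next I would establish the decomposition. The computation above also yields $a_iP_j=0=P_ja_i$ for $j\ne i$. Since $a_i$ commutes with $P_i$ and $I-P_i=Q+\sum_{j\ne i}P_j$, the cross terms with the $P_j$ vanish and
\begin{equation*}
a_i=a_iP_i+a_i(I-P_i)=P_i+(I-P_i)a_i(I-P_i)=P_i+Qa_iQ,
\end{equation*}
as required.

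Finally one must control the spectrum of $Qa_iQ$ regarded as an operator on $QH$. The bound $1\notin\sigma(Qa_iQ)$ is quick: a unit eigenvector $\psi\in QH$ with $Qa_iQ\psi=\psi$ gives $\elbows{\psi,a_i\psi}=1$, and $0\le a_i\le I$ then forces $a_i\psi=\psi$, so $\psi\in P_iH$, contradicting $\psi\in QH$. The genuinely delicate part --- the step I expect to be the main obstacle --- is $0\notin\sigma(Qa_iQ)$, equivalently $\ker(a_i)\cap QH=\brac{0}$. A putative $0$-eigenvector $\psi\in QH$ would satisfy $a_i\psi=0$ while avoiding every eigenvalue-$1$ subspace $P_jH$; compressing $\sum_ja_j=I$ by $Q$ gives only $\sum_jQa_jQ=Q$, i.e.\ that the compressions sum to the identity on $QH$, which by itself does not exclude such a vector. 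Ruling it out must invoke the linear independence of $\brac{a_1,\dots,a_m}$ together with strongness in an essential way, and this interplay is the crux of the argument; it is here that I would concentrate the effort, since the positivity and normalization that sufficed for the other conclusions are not by themselves enough.
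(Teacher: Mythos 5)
Your construction of the $P_i$, the orthogonality argument via $\sum_j\elbows{\phi,a_j\phi}=1$ forcing $a_j\phi=0$ for $j\ne i$, the consequence $m\le n$, the identity $a_i=P_i+Qa_iQ$, and the exclusion of $1$ from $\sigma(Qa_iQ)$ all coincide with the paper's proof (the paper reaches $a_i-P_i=Qa_i=Qa_iQ$ by multiplying by $\sum P_j+Q=I$, which is the same computation as yours). The one step you leave open, $0\notin\sigma(Qa_iQ)$, is precisely the step the paper disposes of with the single sentence that by the Spectral Theorem $a_i=P_i+b_i$ with $0,1\notin\sigma(b_i)$ --- and that is a non sequitur: the spectral theorem gives no control over whether $a_i$ has kernel vectors lying inside $QH$.

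Your instinct that positivity and normalization cannot close this gap is right, but the situation is worse than a missing idea: the claim is false, so no use of linear independence or strongness will rescue it. Take $H=\complex^5$ and $a_1=\rmdiag(1,0,0,\tfrac12,0)$, $a_2=\rmdiag(0,1,0,0,\tfrac12)$, $a_3=\rmdiag(0,0,1,\tfrac12,\tfrac12)$. These are linearly independent effects summing to $I$, each has $1$ in its spectrum and so is strong by Lemma~\ref{lem61}, and by Lemma~\ref{lem34} they generate a strong CSEA with generators $a_1,a_2,a_3$. Any admissible $P_i$ in the conclusion satisfies $P_ia_iP_i=P_i$ (since $P_iQ=0$), hence lies under the eigenvalue-$1$ eigenprojection of $a_i$, which here is the rank-one projection onto $\complex e_i$; so $P_i$ is forced, $Q$ projects onto $\mathrm{span}\brac{e_4,e_5}$, and $Qa_1Q$ restricted to $QH$ is $\rmdiag(\tfrac12,0)$, which has $0$ in its spectrum. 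The tenable conclusion of Theorem~\ref{thm62} is exactly what you established --- $m\le n$, the decomposition $a_i=P_i+Qa_iQ$ with mutually orthogonal nonzero $P_i$, and $1\notin\sigma(Qa_iQ)$ --- while the assertion $0\notin\sigma(Qa_iQ)$ must be dropped or supported by an extra hypothesis. So the crux you were hunting for does not exist, and your proof of the remaining claims is complete and matches the paper's.
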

\begin{proof}
Let $P_i$ be the projections onto the eigenspace $\brac{\phi\in H\colon a_i\phi =\phi}$. Since $1\in\sigma (a_i)$,
$P_i\ne 0$. Suppose $a_i\phi =\phi$ where $\phi\ne 0$. Since $\sum a_k=1$ we have
\begin{equation*}
\phi =\sum a_k\phi =a_i\phi +\sum _{k\ne i}a_k\phi =\phi +\sum _{k\ne i}a_k\phi
\end{equation*}
Hence, $\sum _{k\ne i}a_k\phi =0$ so that $\sum _{k\ne i}\elbows{\phi ,a_k\phi}=0$. Since
$\elbows{\phi ,a_k\phi}\ge 0$ we obtain $\elbows{\phi ,a_k\phi}=0$ for all $k\ne i$. Thus
$\elbows{a_k^{1/2}\phi ,a_k^{1/2}\phi}=0$ so that $a_k^{1/2}\phi =0$ and we have that $a_k\phi =0$. If $k\ne i$ and
$a_i\psi =\psi$, then by the above $a_i\psi =0$. But $a_i\phi=\phi$ so $\psi$ and $\phi$ are eigenvectors with different eigenvalues. Hence, $\phi\perp\psi$. This implies that $P_iP_j=P_jP_i=0$ whenever $i\ne j$. Let $Q$ be the projection given by $Q=I-\sum _{i=1}^mP_i$ so that $\sum P_i+Q=I$. Then $QP_i=P_iQ=0$ for $i=1,\ldots ,m$. By the Spectral Theorem $a_i=P_i+b_i$ where $b_i$ is an effect with $0,1\not\in\sigma (b_i)$. Since
\begin{equation*}
P_ia_i\phi =a_iP_i\phi =P_i\phi
\end{equation*}
for all $\phi\in H$ we have that $P_ia_i=P_i$. Hence,
\begin{equation*}
a_i-P_i=\paren{\sum P_j+Q}(a_i-P_i)=P_ia_i+Qa_i-P_i=Qa_i
\end{equation*}
We conclude that
\begin{equation*}
b_i=a_i-P_i=Qa_i=Qa_iQ\qedhere
\end{equation*}
\end{proof}

It follows from Theorem~\ref{thm62} that if $S\subseteq\escript (H)$ is a strong CSEA with $\dim H=n$ then
$\dim S\le n$. Moreover, if $\dim S=n$ then there are one-dimensional projections $P_1,\ldots ,P_n$ with
$\sum P_i\!=\!1$ and $S\!=\!\brac{\sum\lambda _iP_i\colon\lambda_i\in\sqbrac{0,1}}$. We now give an example of a strong noncommutative quantum CSEA $\fscript$. This is surprising because by Theorem~\ref{thm45} we know that $\fscript$ must be classical.

\begin{exam}{7}  
Let $\dim H=5$ and let $\fscript\subseteq\escript (H)$ be a strong CSEA with $\dim\fscript =3$. If $a_1,a_2,a_3$ are generators of $\fscript$, it follows from Theorem~\ref{thm62} that there exist nonzero projections $P_1,P_2,P_3$ and a projection $Q$ such that $P_1+P_2+P_3+Q=I$ and $a_i=P_i+Qa_iQ$, $0,1\not\in\sigma (Qa_iQ)$. We can and will assume that $\dim Q=2$ from which it follows that $\dim P_i=1$, $i=1,2,3$. Since the $P_i$ and $Q$ mutually commute, they can be simultaneously diagonalized and writing the $a_i$ as matrices we have
\begin{align*}
a_1&=
\left[
\begin{blockarray}{@{\,}ccccc@{\:}}
\bigstrut[t]
1 & 0 & 0 & 0 & 0 & \\
0 & 0 & 0 & 0 & 0 & \\
0 & 0 & 0 & 0 & 0 & \\
\begin{block}{@{\,}ccc[\BAmulticolumn{2}{!{}c!{}}@{\:}]}
0 & 0 & 0 & \multirow{2}{*}{b}\\
0 & 0 & 0 & \\
\end{block}
\BAnoalign{\vskip -7ex}
\end{blockarray}\right]\quad
a_2=
\left[
\begin{blockarray}{@{\,}ccccc@{\:}}
\bigstrut[t]
0& 0 & 0 & 0 & 0 & \\
0 & 1 & 0 & 0 & 0 & \\
0 & 0 & 0 & 0 & 0 & \\
\begin{block}{@{\,}ccc[\BAmulticolumn{2}{!{}c!{}}@{\:}]}
0 & 0 & 0 & \multirow{2}{*}{c}\\
0 & 0 & 0 & \\
\end{block}
\BAnoalign{\vskip -7ex}
\end{blockarray}\right]\\
\noalign{\smallskip}
a_3&=
\left[
\begin{blockarray}{@{\,}ccccc@{\:}}
\bigstrut[t]
0& 0 & 0 & 0 & 0 & \\
0 & 0 & 0 & 0 & 0 & \\
0 & 0 & 1 & 0 & 0 & \\
\begin{block}{@{\,}ccc[\BAmulticolumn{2}{!{}c!{}}@{\:}]}
0 & 0 & 0 & \multirow{2}{*}{d}\\
0 & 0 & 0 & \\
\end{block}
\BAnoalign{\vskip -7ex}
\end{blockarray}\right]
\end{align*}
where $\sqbrac{b},\sqbrac{c},\sqbrac{d}\in\escript (\complex ^2)$ satisfy $\sqbrac{b}+\sqbrac{c}+\sqbrac{d}=I$ and $0,1\not\in\sigma\paren{\sqbrac{b}},\break \sigma\paren{\sqbrac{c}},\sigma\paren{\sqbrac{d}}$. Except for satisfying the above conditions, the effects $\sqbrac{b}$, $\sqbrac{c}$, $\sqbrac{d}$ are arbitrary and we can choose them to be noncommutative as in Example~6. Then $a_1,a_2,a_3$ do not commute so $\fscript$ is a noncommutative strong quantum CSEA. It is not hard to show that $\dim H=5$, $\dim\fscript =3$ are the smallest dimensions for such an example.\hfill\qedsymbol
\end{exam}

Example~7 shows that the converse of the next theorem is false.

\begin{thm}    
\label{thm63}
If a quantum CSEA $\fscript$ is commutative, then $\fscript$ is strong.
\end{thm}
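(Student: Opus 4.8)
The natural strategy is to exploit commutativity through simultaneous diagonalization and then to reduce to the classical effect algebra $S_n$ via Theorem~\ref{thm45}. First I would use Corollary~\ref{cor33} to fix linearly independent generators $a_1,\ldots ,a_m$ of $\fscript$. Since $\fscript$ is commutative these generators mutually commute, and being self-adjoint they can be simultaneously diagonalized: there is an orthonormal basis $\brac{\phi_1,\ldots ,\phi_n}$ of $H$, with $n=\dim H$, in which $a_i=\sum_k\alpha_{ik}P_k$ for each $i$, where $P_k$ is the orthogonal projection onto $\complex\phi_k$. Every element of the span $V_1$ of $\brac{a_1,\ldots ,a_m}$ is then diagonal in this basis.

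Next I would set up the reduction. The map $D$ sending a diagonal self-adjoint operator to the vector $\paren{\elbows{\phi_1,a\phi_1},\ldots ,\elbows{\phi_n,a\phi_n}}$ of its diagonal entries is a linear isomorphism of the diagonal operators onto $\real^n$; it preserves order, since a diagonal operator is positive exactly when its diagonal is nonnegative, and it sends $I$ to $u_n$. Hence $D$ restricts to an affine isomorphism of $\fscript=\escript(H)\cap V_1$ onto $S_n\cap W$, where $W=D(V_1)$ is a subspace of $\real^n$ containing $u_n$. By Theorem~\ref{thm32}, $S_n\cap W$ is precisely a CSEA of the classical CEA $S_n$, so by Theorem~\ref{thm45} it is enough to prove that every such CSEA is classical, that is, affinely isomorphic to $S_p$ for some $p$.

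The last step is the one I expect to be the real obstacle. The tempting move is to take the joint spectral projections $P_k$, equivalently the atoms $\chi_{\brac{k}}$ of $S_n$, as generators; they are sharp, strong and sum to $u_n$. The difficulty is that these atoms need not belong to $\fscript$, because the subspace $W$ cut out by the original generators may fail to contain the coordinate vectors $D(P_k)$; thus the $P_k$ are effects of $\escript(H)$ but not of $\fscript$, and generators must instead be found inside $W$ itself. I would therefore examine the extreme points of the polytope $S_n\cap W$ and try to show they number $2^p$ and assemble into a cube, using the central symmetry coming from $a\mapsto a'$ together with $u_n\in W$. This is where I would concentrate, and I would check it with great care: for $\dim\fscript=2$ all the bounding faces $\brac{v_i=0}$ of $S_n$ pass through $0$ and all the faces $\brac{v_i=1}$ pass through $u_n$, so inside the plane $W$ they cut out two opposed cones and force a parallelogram, but for $\dim\fscript\ge 3$ it is not evident to me that an arbitrary subspace $W$ through $u_n$ produces a parallelepiped rather than a polytope whose number of extreme points is not a power of two, and the whole statement turns on resolving this point.
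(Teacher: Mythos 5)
You follow the paper's own route---commuting generators, simultaneous diagonalization, reduction to the question of whether $S_n\cap W$ is classical for a subspace $W\subseteq\real ^n$ containing $u_n$---and then you stop at exactly the point where the paper's proof also stops. The paper disposes of that last point in a single sentence, asserting that $J(b)=(b^1,\ldots ,b^m)$ is an affine isomorphism of $\fscript$ onto $S_m$; no argument is given, and in general this truncation is neither injective (one must first permute coordinates so that the first $m$ of them separate points of $W$) nor surjective onto $S_m$. So your proposal is not a complete proof, but the obstacle you isolate is genuine, and it is in fact fatal rather than merely delicate: for $\dim\fscript\ge 3$ the slice $S_n\cap W$ need not be a parallelepiped, and the theorem is false as stated. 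Take $\dim H=4$ and let $W=\brac{(x,y,z,w)\in\real ^4\colon x+y=z+w}$, a three-dimensional subspace containing $u_4$. The set $\fscript$ of diagonal effects whose diagonal lies in $W$ is a commutative quantum CSEA by Theorem~\ref{thm32}, with $\dim\fscript =3$. Every edge of the cube $\sqbrac{0,1}^4$ meets the hyperplane $x+y=z+w$ in at most one point, and that point is always an endpoint of the edge; hence the extreme points of $\fscript$ are exactly the six cube vertices lying on $W$, namely $(0,0,0,0)$, $(1,1,1,1)$, $(1,0,1,0)$, $(1,0,0,1)$, $(0,1,1,0)$, $(0,1,0,1)$, so $\fscript$ is an octahedron. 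If $\fscript$ were strong it would equal $\brac{\lambda _1a_1+\lambda _2a_2+\lambda _3a_3\colon\lambda _i\in\sqbrac{0,1}}$ for linearly independent $a_1,a_2,a_3$, i.e.\ a parallelepiped with eight extreme points. Hence this commutative quantum CSEA is not strong.

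As for the plan you sketch for closing the gap: the central symmetry $a\mapsto u-a$ is present in the octahedron as well, so it cannot by itself force a box, and no argument along these lines can succeed. Your two-dimensional analysis is correct and is the part that survives: for a plane $W$ through $0$ and $u_n$, only the extreme slopes among the $2n$ bounding half-spaces are ever active, so $S_n\cap W$ is always a parallelogram and every two-dimensional quantum CSEA is strong (consistent with the paper's remark that two-dimensional quantum CSEA's are commutative). The maximal commutative case $W=\real ^n$, where $\fscript =S_n$, also survives. For $3\le\dim\fscript <\dim H$, commutativity alone does not imply strongness, and a correct version of Theorem~\ref{thm63} would need an additional hypothesis forcing the slice $S_n\cap W$ to be a box.
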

\begin{proof}
Let $\fscript\subseteq\escript (H)$ be commutative with $\dim\fscript =m$ and $\dim H=n$. Then $\fscript$ has $m$ generators $a_1,\ldots ,a_m$ where $a_1,\ldots ,a_m$ mutually commute and are linearly independent. It follows that the $a_i$ are simultaneously diagonalizable so we can assume without loss of generality that $a_1,\ldots ,a_m$ are diagonal $n\times n$ matrices $a_i=\rmdiag (a_i^j)$, $i=1,\ldots ,m$, $j=1,\ldots ,n$, where $a_i^j\in\sqbrac{0,1}$. Since $a_1,\ldots ,a_m$ are linearly independent, they span an $m$-dimensional subspace $V$ of the real linear space
$\real ^n$. For $b\in V$ we denote the $j$th component of $b$ by $b^j$. We conclude that
$V=\brac{\sum\mu _ia_i\colon\mu _i\in\real}$ and $\fscript =\brac{b\in V\colon b^j\in\sqbrac{0,1}}$. It follows that $V$ is isomorphic to $\real ^m$ and $\fscript$ is isomorphic to the classical CEA $S_m$ via the map $J(b)(j)=b^j$, $j=1,\ldots ,m$. Applying Theorem~\ref{thm45}, we conclude that $\fscript$ is strong.
\end{proof}


\begin{thebibliography}{99}
\bibitem{bug96}S.~Bugajski, Fundamentals of fuzzy probability theory,
\textit{Int. J. Theor. Phys.} \textbf{35}, 2229 (1996).
\bibitem{bus91}P.~Busch, Informationally complete sets of physical quantities, \textit{Int. J. Theor. Phys.} \textbf{30}, 1217 (1991).
\bibitem{bhss13}P.~Busch, T.~Heinosaari, J.~Schultz and N.~Stevens, Comparing the degrees of incompatibility inherent in probabilistic physical theories, \textit{Europhys. Lett.} \textbf{103}, 10002 (2013).
\bibitem{dp94}A.~Dvuren\v censkij and S.~Pulmannov\'a, Difference posets, effects, and quantum measurements.
\textit{Int. J. Theor. Phys.} \textbf{33}, 819 (1994).
\bibitem{fghlapp}S.~Filippov, S.~Gudder, T.~Heinosaari and L.~Lepp\" aj\" arvi, Operational restrictions in general probabilistic theories, to appear.
\bibitem{fhl18}S.~Filippov, T.~Heinosaari and L.~Lepp\" aj\" arvi, Simulability of observables in general probabilistic theories,
\textit{Phys. Rev.~A.} \textbf{97}, 062102 (2018).
\bibitem{fb94}D.~Foulis and M.~K.~Bennett, Effect algebras and unsharp quantum logics, \textit{Found.~Phys.} \textbf{24}, 133 (1994).
\bibitem{gud73}S.~Gudder,  Convex structures and operational quantum mechanics, \textit{Comm.~Math.~Phys.} \textbf{29}, 249 (1973).
\bibitem{gp98}S.~Gudder and S.~Pulmannov\'a, Representation theorem for convex effect algebras,
\textit{Comment.~Math.~Univ.~Carolinae} \textbf{39}, 659 (1998).
\bibitem{gpbb99}S.~Gudder, S.~Pulmannov\'a, S.~Bugajski and E.~Beltrametti, Convex and linear effect algebras,
\textit{Reports Math.~Phys.} \textbf{44}, 359 (1999).
\bibitem{gbca17}L.~Guerini, J.~Bavaresco, M.~Cunha and A.~Acin, Operational framework for quantum measurement simulability, 
\textit{J.~Math.~Phys.} \textbf{58}, 7092102 (2017).
\bibitem{hz12}T.~Heinosaari and M.~Ziman, \textit{The Mathematical Language of Quantum Theory}, Cambridge University Press, Cambridge 2012.
\bibitem{jp17}A.~Jen\v cov\'a and M.~Pl\'avala, Conditions on the existence of maximally incompatible two-outcome measurements in general probabilistic theory, \textit{Phys. Rev.~A.} \textbf{96}, 022113 (2017).
\bibitem{nam57}I.~Namioka, \textit{Partially Ordered Linear Topological Spaces}, Memoirs, Amer.~Math.~Soc. \textbf{24}, Providence, Rhode Island 1957.
\bibitem{nc00}M.~Nielsen and I.~Chuang, \textit{Quantum Computation and Quantum Information}, Cambridge University Press, Cambridge 2000.
\bibitem{rock70}R.~Rockafellar, \textit{Convex Analysis}, Princeton University Press, Princeton 1970.
\bibitem{sb14}N.~Stevens and P.~Busch, Steering incompatibility, and Bell inequality violations in a class of probabilistic theories, 
\textit{Phys. Rev.~A.} \textbf{89}, 022123 (2014).

\end{thebibliography}
\end{document}